\documentclass[sigplan,nonacm]{acmart}

\usepackage{listings}
\usepackage{subcaption}
\usepackage{xspace}
\usepackage{wrapfig}
\usepackage{algorithm}
\usepackage{algpseudocode}
\usepackage{algorithmicx}
\usepackage{stmaryrd}
\usepackage{tikz}
\usepackage{dsfont}
\usetikzlibrary{positioning, calc, fit, backgrounds, automata, tikzmark, arrows, shapes, math, arrows.meta, patterns, fit, backgrounds}
\usepackage{cleveref}

\begin{document}
\newcommand{\sys}{\textsc{Nacho}\xspace}
\newcommand{\taco}{\textsc{Taco}\xspace}

\definecolor{codebackground}{RGB}{252,252,252}

\lstdefinelanguage{cin}{
  keywords={forall, where, with},
  keywordstyle=\color{blue},
  identifierstyle=\color{black},
  sensitive=false,
  comment=[l]{//},
  morecomment=[s]{/*}{*/},
  commentstyle=\color{purple}\ttfamily,
  stringstyle=\color{red}\ttfamily,
  basicstyle=\scriptsize\ttfamily,
}

\lstdefinelanguage{cfir}{
  keywords={for, while, if, else, func, exclusive_scan, exclusive_prefix_sum, let},
  keywordstyle=\color{blue},
  identifierstyle=\color{black},
  sensitive=false,
  comment=[l]{//},
  morecomment=[s]{/*}{*/},
  commentstyle=\color{purple}\ttfamily,
  stringstyle=\color{red}\ttfamily,
  basicstyle=\scriptsize\ttfamily,
}

\lstdefinelanguage{cpp}{
  keywords={for, while, if, else, return, do, tid},
  keywordstyle=\color{blue},
  identifierstyle=\color{black},
  sensitive=false,
  comment=[l]{//},
  morecomment=[s]{/*}{*/},
  commentstyle=\color{purple}\ttfamily,
  stringstyle=\color{red}\ttfamily,
  basicstyle=\scriptsize\ttfamily,
  morekeywords = [2]{int, bool, float, void},
  keywordstyle = [2]\color{red},
  morekeywords = [3]{min, max, search, bsearch, lb_search},
  keywordstyle = [3]\color{teal},
  morekeywords = [4]{CSR, DCSR, Parts, SpVec},
  keywordstyle = [4]\color{purple},
}

\lstdefinelanguage{balance}{
  keywords={func, while, bsearch, if, else, elif, partition, return, for, in},
  keywordstyle=\color{blue},
  identifierstyle=\color{black},
  sensitive=false,
  comment=[l]{//},
  morecomment=[s]{/*}{*/},
  commentstyle=\color{purple}\ttfamily,
  stringstyle=\color{red}\ttfamily,
  basicstyle=\scriptsize\ttfamily,
  morekeywords = [2]{true, false, null},
  keywordstyle = [2]\color{red},
}

\newcommand{\aj}[1]{{\color{teal}(aj: #1)}}
\newcommand{\atharva}[1]{{\color{blue}(atharva: #1)}}
\newcommand{\fred}[1]{{\color{cyan}(fred: #1)}}
\newcommand{\rohan}[1]{{\color{olive}(rohan: #1)}}
\newcommand{\bobby}[1]{{\color{magenta}(bobby: #1)}}
\newcommand{\rubens}[1]{{\color{yellow}(rubens: #1)}}

\definecolor{todocolor}{rgb}{0.8,0,0}
\definecolor{editcolor}{rgb}{0,0,0.8}
\newcommand{\TODO}[1]{{\color{todocolor}(TODO: #1)}}
\newcommand{\TODON}[2]{{\color{todocolor}TODO (#1): #2}}
\newcommand{\IGNORE}[1]{}

\newcommand{\coordinate}[1]{c_{#1}}

\newcommand{\WorkOneD}[2]{\mathbf{W}_{#2, \, #1}\left(\coordinate{#1}\right)}
\newcommand{\WorkTwoD}[3]{\mathbf{W}_{#3, \, #2}\left( \coordinate{#1}, \, \coordinate{#2}\right)}
\newcommand{\WorkND}[3]{\mathbf{W}_{#3, \,#2}\left(\coordinate{#1}, \, \ldots, \, \coordinate{#2}\right)}
\newcommand{\WorkNDArg}[4]{\mathbf{W}_{#4, \,#2}\left(\coordinate{#1}, \, \ldots, \, #3\right)}

\newcommand{\WorkOneDArg}[3]{\mathbf{W}_{#3, \, #1}\left(#2\right)}
\newcommand{\WorkTwoDArg}[4]{\mathbf{W}_{#4, \, #1}\left(#2, \, #3\right)}

\newcommand{\WorkFunc}[2]{\mathbf{C}_{#1}\left(#2\right)}

\definecolor{color0}{RGB}{255,255,217}
\definecolor{color1}{RGB}{237,248,177}
\definecolor{color2}{RGB}{199,233,180}
\definecolor{color3}{RGB}{127,205,187}
\definecolor{color4}{RGB}{65,182,196}
\definecolor{color5}{RGB}{29,145,192}
\definecolor{color6}{RGB}{34,94,168}
\definecolor{color7}{RGB}{37,52,148}
\definecolor{color8}{RGB}{8,29,88}

\algrenewcommand\algorithmiccomment[1]{%
\hfill\textcolor{blue}{\(\triangleright\) #1}}
\algtext*{EndFor}
\algtext*{EndWhile}
\algtext*{EndIf}
\algtext*{EndFunction}
\algtext*{EndProcedure}

\title{Partitioning Unstructured Sparse Tensor Algebra for Load-Balanced Parallel Execution}

\author{Atharva Chougule}
\authornote{Equal contribution.}
\orcid{0009-0006-6369-6668}
\affiliation{%
  \institution{Stanford University}
  \streetaddress{353 Jane Stanford Way}
  \city{Stanford}
  \state{CA}
  \country{USA}
  \postcode{94305}
}
\email{atharvac@cs.stanford.edu}

\author{Alexander J Root}
\authornotemark[1]
\orcid{0000-0001-6221-1389}
\affiliation{%
  \institution{Stanford University}
  \streetaddress{353 Jane Stanford Way}
  \city{Stanford}
  \state{CA}
  \country{USA}
  \postcode{94305}
}
\email{ajroot@cs.stanford.edu}

\author{Rubens Lacouture}
\orcid{0009-0008-2268-0074}
\affiliation{%
  \institution{Stanford University}
  \streetaddress{353 Jane Stanford Way}
  \city{Stanford}
  \state{CA}
  \country{USA}
  \postcode{94305}
}
\email{rubensl@stanford.edu}

\author{Bobby Yan}
\orcid{0009-0002-6792-6222}
\affiliation{%
  \institution{Stanford University}
  \streetaddress{353 Jane Stanford Way}
  \city{Stanford}
  \state{CA}
  \country{USA}
  \postcode{94305}
}
\email{bobbyy@cs.stanford.edu}


\author{Rohan Yadav}
\orcid{0000-0003-0746-066X}
\affiliation{%
  \institution{Stanford University}
  \streetaddress{353 Jane Stanford Way}
  \city{Stanford}
  \state{CA}
  \country{USA}
  \postcode{94305}
}
\email{rohany@cs.stanford.edu}

\author{Fredrik Kjolstad}
\orcid{0000-0002-2267-903X}
\affiliation{%
  \institution{Stanford University}
  \streetaddress{353 Jane Stanford Way}
  \city{Stanford}
  \state{CA}
  \country{USA}
  \postcode{94305}
}
\email{kjolstad@cs.stanford.edu}


\begin{abstract}

Sparse tensor algebra is challenging to efficiently parallelize due to
the irregular, data-dependent, and potentially skewed structure of sparse computation.
We propose the first partitioning algorithm that provably load balances the
computation of any sparse tensor algebra expression across parallel execution units.
Our algorithm generalizes parallel merging algorithms to any
number of operands, and to multi-dimensional, hierarchical sparse
data structures.
We implement our algorithm within an existing sparse tensor algebra compilation
framework to automatically generate parallel sparse tensor algebra kernels
that target multi-core CPUs and GPUs.
We show that our generated code is competitive with hand-implemented
parallelization strategies used by vendor libraries like
Intel MKL and NVIDIA cuSPARSE (geo-means of $0.73$--$3.4\times$) and \taco (geo-means of $1.0$--$2.4\times$), and
significantly outperforms general-purpose strategies
for sparse tensor expressions where specialized algorithms
have not been developed (geo-means of $2.0$--$6.4\times$).
%


\end{abstract}

\maketitle 

\section{Introduction}
\label{sec:introduction}

Sparse tensor algebra is used across many computational domains~\cite{frankle2019lottery, kolecki2002introduction, kats2013stf}, and high-performance implementations of sparse tensor algebra operations
enable these domains to solve important problems quickly.
Parallelization is a critical component of efficient execution on modern machines, and the main
challenge in parallelizing sparse tensor algebra lies in load balancing the irregular work
created from \textit{coiteration}.
%
%
%
%
Coiteration is the process of simultaneously iterating over multiple sparse tensors 
and performing different operations based on how the coordinates in each sparse data 
structure match.
The total amount of work performed by coiteration is data-dependent and a function of the interaction between
sparse data structures, making extracting equally-sized pieces of work a search problem.
This difficulty is exacerbated by the increased irregularity that comes from the hierarchical structure in both
storing and coiterating over multi-dimensional sparse tensors.

The community has developed a variety of compilers that take arbitrary sparse tensor expressions and descriptions of the underlying data structures and produce efficient implementations~\cite{kjolstad2017taco, ahrens2023looplets, ahrens2025finch, shaikhha2022sdql}.
Extensions of these compilers support various types of parallel hardware, e.g., multicore CPUs~\cite{wilkinson2023regtiling, yan2026scorch, ghorbani2025compressed, zhao2022spf}, GPUs~\cite{senanayake2020scheduling, won2026insum, ye2023sparsetir, bansal2023mosaic, zhang2016hybridsparse}, supercomputers~\cite{yadav2022spdistal}, or custom accelerators~\cite{hsu2025stardust, koul2025onyx, siracusa2026ember}.
However, these parallelization strategies either do not guarantee the creation of equally-sized pieces of work for each parallel worker~\cite{senanayake2020scheduling, yadav2022spdistal, hsu2025stardust, bansal2023mosaic, koul2025onyx, yan2026scorch, ghorbani2025compressed, zhao2022spf} or can only do so for expressions with a single sparse tensor~\cite{ye2023sparsetir, won2026insum, wilkinson2023regtiling, siracusa2026ember, zhang2016hybridsparse}.
Additionally, these compilers often require user input in the form of scheduling directives to
choose a parallelization strategy~\cite{senanayake2020scheduling, bansal2023mosaic, yadav2022spdistal, siracusa2026ember, hsu2025stardust, koul2025onyx, ye2023sparsetir}.

For specific kernels and specific sparse data structures, the community has developed algorithms that 
guarantee an equal distribution of work among parallel processors~\cite{dalton2015mergepath, yang2015spmspv, yang2022graphblast, xiao2023survey, odemuyiwa2023drt, jain2025rassm}.
Many of these algorithms~\cite{dalton2015mergepath, yang2015spmspv, yang2022graphblast, xiao2023survey} are inspired by parallel merging algorithms from the sorting
community~\cite{saher2012mergepath, green2012gpumergepath}, and extract load-balanced
parallelism from the merge-like operations that implement coiteration in sparse
tensor algebra kernels.
These strategies are specialized for specific binary operations and only handle one and two-dimensional tensors.
Other strategies developed for custom accelerators~\cite{odemuyiwa2023drt, jain2025rassm} collect statistics across the operands to adapt partitions, but do so in a sequential manner.
%

We propose a load-balanced parallel partitioning technique that guarantees equally-sized pieces of work
(within a small constant) for any number of tensors of any number of dimensions, and generalizes the ideas proposed in prior algorithms for specific expressions and data structures.
Our partitioning technique efficiently searches the $n$-dimensional iteration space of the tensor algebra expression to find a load-balanced partitioning with respect to an abstract cost model.
%
The key insight behind our partitioning strategy is that load-balanced \textit{irregular} partitions of the iteration space can be extracted from a combination of the loop structure and efficient queries on sparse data structures.
Critically, these queries can be performed independently, allowing for the partitioning strategy
itself to be computed \emph{in parallel} rather than as a separate pre-processing step.
We implement a concrete cost model that corresponds to the number of non-zero values within
a partition, which can be queried efficiently for many sparse tensor formats.
%
By specializing our partitioning technique with this format-specific cost model and a concrete loop order, we generate custom partitioning algorithms for sparse kernels within a sparse tensor algebra compiler.

We implement our partitioning approach in a prototype compiler called \sys{},\footnote{The ideal plate of nachos load-balances cheese across all chips.}
which extends the \taco{}~\cite{kjolstad2017taco, chou2018formats} compiler to generate load-balanced parallel kernels for a general class of expressions and sparse data structures. 
To the best of our knowledge, \sys{} is the first compiler to support load-balanced GPU code generation of 
sparse tensor algebra kernels with multiple sparse operands.
Unlike prior approaches, our approach is fully automated and does not require user input (in the form of scheduling
directives) to generate parallel code.
The contributions of our work are: 
\begin{itemize}
    \item A parallel partitioning strategy for sparse tensor algebra that guarantees load-balanced parallel execution.
    \item A code generation approach that integrates this partitioning strategy into a sparse compilation framework.
\end{itemize}
To evaluate our contributions, we compare the performance of \sys{} on a variety
of sparse tensor algebra expressions on multi-core CPUs and GPUs.
We show that \sys{} achieves
high performance when compared to tuned vendor libraries like Intel MKL~\cite{mkl}, NVIDIA cuSPARSE~\cite{cusparse}, and the \taco compiler~\cite{kjolstad2017taco, senanayake2020scheduling}. 
We then show that the generality of our approach enables high performance 
for expressions not present in vendor libraries, heavily outperforming generic strategies
in existing systems like PyTorch Sparse~\cite{pytorch_sparse}.
\section{Motivating Example and Challenges}
\label{sec:motivation}
We now use several examples to illustrate the concrete challenges of
parallelizing multi-sparse tensor algebra operations.
We start with coiteration over a single dimension before moving into 
additional challenges caused by multi-dimensional tensors as well as nested intersection iteration patterns.


\subsection{Parallelizing Coiteration}

Sparse tensor compilers~\cite{kjolstad2017taco, ahrens2025finch, bik2022mlir, shaikhha2022sdql} lower sparse additions and multiplications into coiteration over the union or intersection of their non-zero values, respectively.
For example,~\Cref{fig:vec3-compute} illustrates element-wise multiplication of three sparse vectors (e.g.,~\Cref{fig:sparse-vec-rep}) via a \textit{coiterative} merge that computes the intersection of the non-zero coordinates.
This multi-way merge maintains a pointer into each sparse vector and increments the pointers iteratively to compute the intersection~\cite{kjolstad2017taco}, and performs the element-wise multiplication 
when all pointers match on a non-zero coordinate.
%
This coiterative {\tt while} loop is not amenable to standard data parallelization techniques.
However, the parallel sorting community has developed techniques to parallelize a coiterative \textit{merge} of two vectors in a load-balanced manner~\cite{saher2012mergepath, green2012gpumergepath}.
%


\begin{figure}
  \centering
\begin{lstlisting}[language=cpp, mathescape=true, backgroundcolor=\color{codebackground}, frame=single,
caption={Sequential code computing $z_{i} = a_{i} \cdot b_{i} \cdot c_{i}$ via a three-finger merge on sparse vectors. $\texttt{nnz}_x$ denotes the count of non-zeros in vector $x$ and $p_x$ is the iterator over vector $x$.},
label={fig:vec3-compute}]
$p_a$ = 0, $p_b$ = 0, $p_c$ = 0, $p_z$ = 0
while $p_a$ < nnz($a$) && $p_b$ < nnz($b$) && $p_c$ < nnz($c$):
  $i_a$ = a.i[$p_a$], $i_b$ = b.i[$p_b$], $i_c$ = c.i[$p_c$]
  i = min($i_a$, $i_b$, $i_c$)
  if i == $i_a$ && i == $i_b$ && i == $i_c$:
    z.v[$p_z$] = a.v[$p_a$] * b.v[$p_b$] * c.v[$p_c$]
    z.i[$p_z$++] = i
  $p_a$ += (i${==}i_a$); $p_b$ += (i${==}i_b$); $p_c$ += (i${==}i_c$)
\end{lstlisting}
\vspace{-1.25em}
\end{figure}

\begin{figure}
  \centering
  \begin{tikzpicture}
    \pgfmathsetmacro{\cs}{0.5}   
    \pgfmathsetmacro{\gp}{0.3}    

    \newcommand{\ivvec}[4]{%
      \begin{scope}[xshift=#2 cm]
        \foreach \dummy [count=\k from 1] in {#3} { \xdef\Niv{\k} }
        \pgfmathsetmacro{\labelx}{\Niv*\cs*0.5}
        \node[font=\small] at (\labelx, 2*\cs+0.2) {$#1$};
        \foreach \x [count=\col from 0] in {#3} {
          \draw (\col*\cs, \cs) rectangle (\col*\cs+\cs, 2*\cs);
          \node[font=\scriptsize] at (\col*\cs+0.5*\cs, 1.5*\cs) {$\x$};
        }
        \foreach \x [count=\col from 0] in {#4} {
          \draw (\col*\cs, 0) rectangle (\col*\cs+\cs, \cs);
          \node[font=\scriptsize] at (\col*\cs+0.5*\cs, 0.5*\cs) {$\x$};
        }
      \end{scope}
    }

    \node[anchor=east, font=\small] at (-0.1, 1.5*\cs) {$.i$};
    \node[anchor=east, font=\small] at (-0.1, 0.5*\cs) {$.v$};

    \pgfmathsetmacro{\xA}{0}
    \pgfmathsetmacro{\xB}{3*\cs + \gp}
    \pgfmathsetmacro{\xC}{8*\cs + 2*\gp}

    \ivvec{a}{\xA}{1, 4, 9}         {a_0, a_1, a_2}
    \ivvec{b}{\xB}{1, 2, 4, 9, 10}  {b_0, b_1, b_2, b_3, b_4}
    \ivvec{c}{\xC}{1, 5, 9, 11}    {c_0, c_1, c_2, c_3}
  \end{tikzpicture}
  \caption{Sparse vectors $a$, $b$, and $c$, as index/value pairs.}
  \label{fig:sparse-vec-rep}
\end{figure}

Our approach in \sys adapts and generalizes this strategy to parallelize the coiterative intersections and unions
performed by sparse tensor algebra kernels.
\sys{} partitions the dense space of all possible coordinates
in each vector (referred to as the \emph{coordinate space}) into \emph{irregular} pieces,
such that the number of non-zero elements processed in each partition is equal,
rather than the total number of coordinates.
\sys{} finds these partition boundaries by binary searching over the coordinate
space for ranges that enclose an equal number of non-zero coordinates across
each sparse vector.
When considering a candidate range $[x_l, x_h)$, the number of non-zero coordinates 
of each sparse vector contained within $[x_l, x_h)$ can be found by another binary search
over the physical data structures storing the sparse vectors.

\begin{figure}[b]
    \centering

    \newcommand{\setupcolors}[1]{
        \foreach \bound/\col [remember=\bound as \rawprevbound (initially 0)] in {#1} {
            
            \pgfmathtruncatemacro{\startidx}{\rawprevbound}
            \pgfmathtruncatemacro{\endidx}{\bound-1}
            
            \ifnum\endidx<\startidx\relax
            \else
                \foreach \i in {\startidx,...,\endidx} {
                    \global\expandafter\edef\csname cellcolor\i\endcsname{\col}
                }
            \fi
        }
    }

    \newcommand{\drawdomain}[3][12]{
        \begin{scope}[shift={(0, #2 * \offset)}]
            \pgfmathtruncatemacro{\maxcol}{#1-1}
            
            \foreach \x in {0,...,\maxcol} {
                \edef\mycolor{\ifcsname cellcolor\x\endcsname\csname cellcolor\x\endcsname\else white\fi}
                \fill[\mycolor!80] (\x, 0) rectangle (\x+1, 1);
            }
            
            \draw (0, 0) grid (#1, 1);
            \node[anchor=east] at (-0.5, 0.5) {#3};
            
            \foreach \x in {0,...,\maxcol} {
                \node at (\x+0.5, 0.5) {\x};
            }
        \end{scope}
    }

    \newcommand{\drawvector}[4][12]{
        \begin{scope}[shift={(0, #2 * \offset)}]
            
            \foreach \idx/\val in {#4} {
                \edef\mycolor{\ifcsname cellcolor\idx\endcsname\csname cellcolor\idx\endcsname\else white\fi}
                \fill[\mycolor!80] (\idx, 0) rectangle (\idx+1, 1);
            }
            
            \draw (0, 0) grid (#1, 1);
            \node[anchor=east] at (-0.5, 0.5) {#3};
            
            \foreach \idx/\val in {#4} {
                \node at (\idx+0.5, 0.5) {\val};
            }
        \end{scope}
    }

    \newcommand{\drawpartitionlines}[2]{
        \pgfmathsetmacro{\bottomedge}{#2 * \offset - 0.2}
        \foreach \x in {#1} {
            \draw[line width=1.5pt] (\x, 1.2) -- (\x, \bottomedge);
        }
    }

    \newcommand{\drawpartitionlabels}{
        \pgfmathsetmacro{\labely}{3 * \offset - 0.7}
        \node at (1, 1.7) {$[0,2)$};
        \node at (3.5, 1.7) {$[2,5)$};
        \node at (7.5, 1.7) {$[5,10)$};
        \node at (11, 1.7) {$[10,12)$};
    }

    \newcommand{\drawboundarylabels}{
        \pgfmathsetmacro{\labely}{3 * \offset - 0.7}
        \node at (1, \labely) {$p=0$};
        \node at (3.5, \labely) {$p=1$};
        \node at (7.5, \labely) {$p=2$};
        \node at (11, \labely) {$p=3$};
    }

    \begin{tikzpicture}[scale=0.4]
        \def\offset{-1.5} 

        \setupcolors{2/color1,5/color3,10/color5,12/color7}

        \drawdomain[12]{0}{Coordinate space}
        
        \drawvector{1}{Sparse vector $a$}{1/$a_0$, 4/$a_1$, 5/$a_2$}
        \drawvector{2}{Sparse vector $b$}{1/$b_0$, 2/$b_1$, 4/$b_2$, 9/$b_3$, 10/$b_4$}
        \drawvector{3}{Sparse vector $c$}{1/$c_0$, 5/$c_1$, 10/$c_2$, 11/$c_3$}

        \drawpartitionlines{0,2,5,10,12}{3}

        \drawpartitionlabels

        \drawboundarylabels
        
    \end{tikzpicture}
    \vspace{-1em}
    \caption{An irregular partitioning of the coordinate space into 4 parts such that 3 non-zeros are processed in each partition. Partition $p$'s  boundaries $[x_{l}, x_{h})$ are obtained by independently searching for $x_l$ and $x_{h}$ s.t. $[0, x_l)$ contains $3p$ non-zeros and $[0, x_{h})$ contains $3(p+1)$ non-zeros.}
    \label{fig:vec_lb_example}
\end{figure}

\Cref{fig:vec_lb_example} illustrates a partitioning found
by using this search for three-way merge.
The partition boundaries are found by searching for cutoff points in the coordinate
space that enclose a specific number of non-zeros.
For example, the first partition is found by searching for cutoff coordinate $x_1$ such that the range of coordinates
from $[0, x_1)$ contains $k$ non-zeros.
Then, the second partition $[x_1, x_2)$ can be computed by
also searching for $x_1$ and then searching for $x_2$ where the range of coordinates
$[0, x_2)$ contains $2k$ non-zeros.

%

Critically, \sys{}'s strategy allows for the bounds of each range in the partition
to be computed independently, and thus \emph{in parallel}.
%
As a result, the strategy can be applied to~\Cref{fig:vec3-compute} by having parallel workers
independently compute the partition bounds shown in~\Cref{fig:vec_lb_example}, 
and then begin load-balanced coiteration within the corresponding ranges.

\subsection{Higher-Order Coiteration}

While the three-way sparse vector multiplication example shown previously
builds intuition for parallelizing coiteration of multiple sparse operands,
generalizing these ideas also requires extending to multiple tensor dimensions.
Consider the coiteration required for an addition between 
two Doubly Compressed Sparse Row (DCSR)~\cite{buluc2008dcsr} matrices
in~\Cref{lst:eadd-dcsr2-cfir}.
This hierarchical coiteration requires a partitioning strategy that accounts for work across multiple dimensions.
\Cref{fig:dcsr-add-partition} depicts one such strategy, with a two-dimensional irregular partitioning to load balance the number of non-zeros processed in each partition.

\begin{figure}
\begin{lstlisting}[language=cfir, mathescape=true, basicstyle=\footnotesize\ttfamily, backgroundcolor=\color{codebackground}, frame=single,
caption={$Z_{ij} = A_{ij}+B_{ij}$ with DCSR inputs.},
label={lst:eadd-dcsr2-cfir}]
while $i~\shortleftarrow$ rows($A$) $\cup$ rows($B$):
  while $j~\shortleftarrow$ cols($A[i]$) $\cup$ cols($B[i]$):
    $Z[i, j]$ = $A[i, j]$ + $B[i, j]$
\end{lstlisting}

\begin{lstlisting}[language=cfir, mathescape=true, basicstyle=\footnotesize\ttfamily, backgroundcolor=\color{codebackground}, frame=single,
caption={$Z_{ij} = A_{ij}\cdot B_{ij}$ with DCSR inputs.},
label={lst:emul-dcsr2-cfir}]
while $i~\shortleftarrow$ rows($A$) $\cap$ rows($B$):
  while $j~\shortleftarrow$ cols($A[i]$) $\cap$ cols($B[i]$):
    $Z[i, j]$ = $A[i, j]$ * $B[i, j]$
\end{lstlisting}

\end{figure}

The partitioning problem is further complicated
by \emph{hierarchical skipping}, as observed in the Hadamard product on DCSR matrices
in~\Cref{lst:emul-dcsr2-cfir}.
%
Here, coiteration over the $j$ dimension is restricted to rows that are nonempty in both operands.
The sparse intersection in the outer loop causes entire rows to be skipped, and consequently the inner iteration over columns is skipped for those rows.
This makes computing load-balanced partitions difficult, as large regions of the coordinate space may correspond to no non-zeros.
Achieving load-balance in this case requires determining which rows contribute to the work and then constructing partitions only over these rows, as visualized in~\Cref{fig:dcsr-mul-partition}.

\begin{figure}
\centering
\newcommand{\drawsparsematrix}[6]{
\begin{scope}[yscale=-1, xscale=1]

        \edef\skiplist{#5}%
        \foreach \r in \skiplist {
            \fill[orange!10] (0, \r) rectangle (#2, \r+1);
        }

        \edef\boundarylist{#4}%
        \edef\coordlist{#3}%

        \edef\colorlist{#6}
        \foreach \r/\c [count=\i] in \coordlist {
            \foreach \col [count=\j] in \colorlist {
                \ifnum\i=\j
                    \fill[\col] (\c, \r) rectangle (\c+1, \r+1);
                \fi
            }
        }
        
        \draw[black!40, thin] (0, 0) grid (#2, #1);
        
        \edef\boundarylist{#4}%
        \foreach \xone/\yone/\xtwo/\ytwo in \boundarylist {
            \draw[black, line width=1.75pt] (\xone, \yone) -- (\xtwo, \ytwo);
        }

        \foreach \r in \skiplist {
            \draw[orange, thick, dashed] (0, \r+0.5) -- (#2, \r+0.5);
        }

    \end{scope}
}

\newcommand{\matrixA}{
    0/0, 0/3, 0/4,
    0/5, 2/1, 2/4,
    2/6, 2/7,
    4/0, 4/1, 4/4, 4/6}

\newcommand{\matrixAadd}{
    color1, color1, color1,
    color3, color3, color3,
    color5, color5,
    color7, color7, color7, color7}

\newcommand{\matrixAmul}{
    color1, color1, color3,
    color3, color5, color5,
    color7, color7,
    gray, gray, gray, gray}

\newcommand{\matrixB}{
    0/0, 0/3, 0/4,
    2/0, 2/2, 2/3, 2/6, 2/7,
    3/1, 3/3, 3/4, 3/7}

\newcommand{\matrixBadd}{
    color1, color1, color1,
    color3, color3, color3,
    color5, color5, color5, color5,
    color7, color7}

\newcommand{\matrixBmul}{
    color1, color1, color3,
    color3, color5, color5,
    color7, color7,
    gray, gray, gray, gray}

\newcommand{\addpartitionlines}{
    0/0/0/1,
    5/0/5/1,
    5/2/5/3,
    4/3/4/4,
    8/0/8/2,
    8/3/8/5,
    0/1/0/3,
    0/4/0/5,
    0/0/8/0,
    0/1/5/1,
    5/2/8/2,
    0/3/8/3,
    0/4/4/4,
    0/5/8/5
}

\newcommand{\mulpartitionlines}{
    0/0/0/1,
    4/0/4/1,
    1/2/1/3,
    5/2/5/3,
    8/2/8/3,
    0/0/8/0,
    0/1/8/1,
    0/2/8/2,
    0/3/8/3
}

\newcommand{\mulskiplines}{1, 3, 4}

\newcommand{\addpartcolors}{color1/0/0/4/0, color3/5/0/4/2, color5/5/2/3/3, color7/4/3/7/4}

\begin{subfigure}{0.48\textwidth}
\centering
\begin{tikzpicture}[scale=0.35, baseline=(current bounding box.center)]
    \drawsparsematrix{5}{8}
        {\matrixA}
        {\addpartitionlines}
        {} 
        {\matrixAadd}
\end{tikzpicture}
\hspace{2em}$\vcenter{\hbox{\Large $\oplus$}}$\hspace{2em}
\begin{tikzpicture}[scale=0.35, baseline=(current bounding box.center)]
    \drawsparsematrix{5}{8}
        {\matrixB}
        {\addpartitionlines}
        {} 
        {\matrixBadd}
\end{tikzpicture}
\caption{Partitioning for DCSR addition: 6 non-zeros per chunk.}
\label{fig:dcsr-add-partition}
\end{subfigure}
\vfill
\vspace{1em}
\begin{subfigure}{0.48\textwidth}
\centering
\begin{tikzpicture}[scale=0.35, baseline=(current bounding box.center)]
    \drawsparsematrix{5}{8}
        {\matrixA}
        {\mulpartitionlines}
        {\mulskiplines}
        {\matrixAmul}
\end{tikzpicture}
\hspace{2em}$\vcenter{\hbox{\Large $\odot$}}$\hspace{2em}
\begin{tikzpicture}[scale=0.35, baseline=(current bounding box.center)]
    \drawsparsematrix{5}{8}
        {\matrixB}
        {\mulpartitionlines}
        {\mulskiplines}
        {\matrixBmul}
\end{tikzpicture}
\caption{Partitioning for DCSR (element-wise) multiplication. Entire rows are skipped (highlighted in orange) due to hierarchical skipping.}
\label{fig:dcsr-mul-partition}
\end{subfigure}
\caption{Two partitioning strategies that load-balance the number of non-zeros iterated across 4 parallel threads. Colors denote partition membership, where gray non-zeros are not part of any partition. Due to the hierarchical skipping in \textbf{(b)}, it is impossible to load balance iterated non-zeros without knowing which rows will be skipped versus iterated.}
\label{fig:dcsr-partitions}
\vspace{-1em}
\end{figure}

\section{Partitioning Algorithm}
\label{sec:partition}


We now discuss a generalized framework and search procedure for computing partitions that scales to
sparse tensor algebra expressions that contain any number of tensors,
any number of dimensions, and any combination of unions and intersections.
%
%
This search procedure is defined with respect to abstract cost
functions that model the amount of work performed within each partition;
the definitions of these functions are specialized to the concrete
tensor algebra expression and sparse data structure, and are described
in~\Cref{sec:codegen}.
We first describe the class of cost models supported by our search procedure in 
\Cref{sec:work-funcs}.
Then, in~\Cref{sec:hierarchical-search}, we describe a hierarchical search strategy that can 
compute partitions for tensor algebra expressions with any number of dimensions and 
union iteration patterns between operands.
Finally, in~\Cref{sec:recursive-partitioning}, we introduce a recursive variant 
of the search that progressively computes partitions for nested intersection iteration patterns,
which require a different strategy due to hierarchical skipping.

\subsection{Cost Functions}
\label{sec:work-funcs}

A candidate sparse tensor algebra expression is defined over
a $d$-dimensional iteration space $\mathcal{I}$.
We assume these dimensions are ordered from 0 to $d-1$ by 
some loop ordering $\mathcal{L}$.
We define a class of cost functions $\mathbf{C}_m$ for each
dimension $m$ in $\mathcal{I}$.
Let $\mathbf{N}_m$ refer to the extent of dimension $m$ in
$\mathcal{I}$.
Then, $\mathbf{C}_m (x \mid \, x_0, \ldots, x_{m-1})$ models the cost of
coiterating lexicographically with respect to $\mathcal{L}$ between
the $d$-dimensional coordinates
$(x_0, \ldots, x_{m-1}, 0, 0, \ldots, 0)$ and
$(x_0, \ldots, x_{m-1}, x-1,\allowbreak  \mathbf{N}_{m+1}, \allowbreak  \ldots, \mathbf{N}_{d-1})$.
Intuitively, this subspace of $\mathcal{I}$ fixes coordinates
for the $0$ to $m-1$ dimensions of $\mathcal{I}$, limits
the space of coordinates for dimension $m$ to be within $[0, x)$ and
spans the full extent of all remaining dimensions $m+1$ to $d-1$.

Various subspaces defined by $\mathbf{C}_m$ are shown in~\Cref{fig:coordinate-tree}.
\begin{figure}[t]
\vspace{-1em}
\centering
\newcommand{\registernode}[1]{%
  \expandafter\def\csname present@#1\endcsname{1}%
}

\newcommand{\nodewithlabel}[6]{%
  \if1#1%
    \node[
      draw, circle,
      minimum size=4.5mm,
      inner sep=0.5pt,
      font=\scriptsize,
      fill=white
    ] (#6) at (#2,#3) {$#4_{#5}$};%
  \else
    \node[
      draw, circle, dotted,
      minimum size=4.5mm,
      inner sep=0.5pt,
      font=\scriptsize,
      fill=white,
      text=black
    ] (#6) at (#2,#3) {$#4_{#5}$};%
  \fi
}

\newcommand{\drawcoordinatetree}[9]{%
  \begin{tikzpicture}[scale=0.8, transform shape]
    \begingroup
    \fill[color2, rounded corners=1mm]
    (-5.4,-1.9) -- (-5.4,0.3) -- (-1.82,0.3) -- (-1.82,-1.9) -- cycle;

  \fill[color4, rounded corners=1mm]
    (-1.78,-1.9) -- (-1.78,-0.5) -- (0.58,-0.5) -- (0.58,-1.9) -- cycle;

  \fill[color6, rounded corners=1mm]
    (0.6,-1.9) -- (0.6,-1.3) -- (1.2,-1.3) -- (1.2,-1.9) -- cycle;
  \node[anchor=west, font=\small, draw, rectangle] (#4) at (-0.225, 0.7) {$#4$};
  \forcsvlist{\registernode}{#9}

  \foreach \ii in {0,...,#1} {
    \pgfmathsetmacro{\xI}{(\ii-1)*#5}

    \xdef\Ipresent{0}
    \foreach \jj in {0,...,#2} {
      \foreach \kk in {0,...,#3} {
        \ifcsname present@\ii/\jj/\kk\endcsname
          \xdef\Ipresent{1}
        \fi
      }
    }

    \nodewithlabel{\Ipresent}{\xI}{0}{i}{\ii}{inode\ii}
    \if1\Ipresent
            \draw[ thin, black] (#4) -- (inode\ii);
      \else
        \draw[thin,black, dotted] (#4) -- (inode\ii);
        \fi
    \foreach \jj in {0,...,#2} {
      \pgfmathsetmacro{\xJ}{\xI + (\jj - (#2)/2.0)*#6}
      \pgfmathsetmacro{\yJ}{-#8}

      \xdef\Jpresent{0}
      \foreach \kk in {0,...,#3} {
        \ifcsname present@\ii/\jj/\kk\endcsname
          \xdef\Jpresent{1}
        \fi
      }


      \nodewithlabel{\Jpresent}{\xJ}{\yJ}{j}{\jj}{jnode\ii-\jj}

      \if1\Ipresent
          \if1\Jpresent
            \draw[thin, black] (inode\ii) -- (jnode\ii-\jj);
          \else
            \draw[thin,black, dotted] (inode\ii) -- (jnode\ii-\jj);
          \fi
      \else
        \draw[thin,black, dotted] (inode\ii) -- (jnode\ii-\jj);
        \fi

      \foreach \kk in {0,...,#3} {
        \pgfmathsetmacro{\xK}{\xJ + (\kk - (#3)/2.0)*#7}
        \pgfmathsetmacro{\yK}{-2*#8}

        \xdef\Kpresent{0}
        \ifcsname present@\ii/\jj/\kk\endcsname
          \xdef\Kpresent{1}
        \fi


        \nodewithlabel{\Kpresent}{\xK}{\yK}{k}{\kk}{knode\ii-\jj-\kk}

        \if1\Jpresent
            \if1\Kpresent
              \draw[thin, black] (jnode\ii-\jj) -- (knode\ii-\jj-\kk);
            \else
              \draw[thin, black, dotted] (jnode\ii-\jj) -- (knode\ii-\jj-\kk);
            \fi
        \else
              \draw[thin, black, dotted] (jnode\ii-\jj) -- (knode\ii-\jj-\kk);
        \fi
      }
    }
  }
    \endgroup
  \end{tikzpicture}%
}

\newcommand{\drawtwotreesB}[9]{%
  \drawcoordinatetree
    {#1}
    {#2}
    {#3}
    {#4}
    {#5}
    {#6}
    {#7}
    {#8}
    {#9}
}
\newcommand{\drawtwotrees}[9]{%
  \drawcoordinatetree
    {#1}
    {#2}
    {#3}
    {#4}
    {#5}
    {#6}
    {#7}
    {#8}
    {#9}
  \par\vspace{3pt}%
  \drawtwotreesB%
}

\drawtwotrees
  {2}
  {2}
  {1}
  {A}
  {3.6}
  {1.2}
  {0.65}
  {0.8}
  {0/0/0, 0/2/1, 0/2/0, 1/0/1, 1/2/0, 1/2/1}
  {2}
  {2}
  {1}
  {B}
  {3.6}
  {1.2}
  {0.65}
  {0.8}
  {1/0/0, 1/1/0, 1/2/0, 1/1/1, 2/0/0, 2/0/1, 2/1/1}
\par\vspace{4pt}%
 \begin{tikzpicture}[scale=0.8, transform shape]
     \fill[color2] (0,0) rectangle (0.4,0.35);
    \node[anchor=west, font=\small] at (0.5, 0.125) {$\mathbf{C}_i(i_1)$};
    \fill[color4] (2.5,0) rectangle (2.9,0.35);
    \node[anchor=west, font=\small] at (3.0, 0.125) {$\mathbf{C}_j(j_2 \mid i_1)$};
    \fill[color6] (5.5,0) rectangle (5.9,0.35);
    \node[anchor=west, font=\small] at (6.0, 0.125) {$\mathbf{C}_k(k_1 \mid i_1, j_2)$};
 \end{tikzpicture}
\caption{Iteration sub-spaces enclosed by different cost functions over two $3\times 3 \times 2$ sparse tensors $A$ and $B$. Dotted coordinates are 0-valued and are not explicitly stored.}
\label{fig:coordinate-tree}
\end{figure}
$\mathbf{C}_i(i_1)$ refers to all iteration space
points with the first coordinate in $[i_0, i_1)$, which is just coordinates starting with $i_0$.
Then, $\mathbf{C}_j(j_2 \mid i_1)$ refers to all iteration space points with $i_1$ as the $i$ coordinate and $j$ coordinates within $[0, j_2)$.
Note that points with $i$ coordinate as $i_0$ are excluded.
Likewise, $\mathbf{C}_k(k_1 \mid  i_1, j_2)$ is just the singular coordinate $(i_1, j_2, k_0)$, as $[k_0, k_1) = k_0$.

We require two properties of cost functions for use in our search algorithm:
\emph{monotonicity} and \emph{hierarchical consistency}: 
\begin{description}
    \item[Monotonicity:] Cost functions must be monotone:
\[
\WorkFunc{m}{x \mid x_0, \ldots, x_{m-1}} \leq \WorkFunc{m}{x+1 \mid x_0, \ldots, x_{m-1}}
\]
    \item[Hierarchical Consistency:] Cost functions must hierarchically compose, meaning
    the cost at a coordinate must equal the cost to fully coiterate the dimension after it.
    Specifically, $\forall~x_m \in [0, \mathbf{N}_m)$:
    \begin{align*}
    \WorkFunc{m}{x_m+1 \mid x_0, \ldots, x_{m-1}} =~ &\WorkFunc{m}{x_m \mid x_0, \ldots, x_{m-1}} + \\
                                                         &\WorkFunc{m+1}{N_{m+1} \mid x_0, \ldots, x_{m-1}, x_m}
    \end{align*}

\end{description}






\subsection{Hierarchical Search Partitioning}
\label{sec:hierarchical-search}

We now describe a partitioning algorithm that leverages these cost functions to find
load-balanced partitions of sparse tensor algebra operations over iteration spaces with
arbitrary dimensionality.
Our algorithm generalizes a one-dimensional binary search into a 
\textit{hierarchical} binary search with iterative (per-loop) partitioning.
The partitioning algorithm is shown in~\Cref{alg:hierarchical-search-v2}, which searches for the 
multi-dimensional coordinate that incurs a cost of $\mathcal{Q}$.
The hierarchical consistency property of cost functions enables the hierarchical search 
to be broken into a search at each iteration space dimension.
Likewise, monotonicity enables binary search within each dimension.
\Cref{alg:hierarchical-search-v2} is used to construct a partition for each parallel processor
by invoking it with a different value of $\mathcal{Q}$ for each processor.
Concretely, given total cost of $\mathcal{Q}^*$, processor $p \in [0, \mathcal{P})$ can discover partition boundaries
using~\Cref{alg:hierarchical-search-v2} by searching for $p \cdot \frac{\mathcal{Q}^*}{\mathcal{P}}$ as the partition
lower-bound and $(p +1) \cdot \frac{\mathcal{Q}^*}{\mathcal{P}}$ as the partition upper bound.


\begin{algorithm}
\caption{Hierarchical Search Partitioning Algorithm}
\small
\label{alg:hierarchical-search-v2}
\begin{algorithmic}[1]
\State \textbf{Input}:  Loop order $\mathcal{L}$, Cost functions $\mathbf{C}_{m}$ for $m \in \mathcal{L}$, Queried Cost $\mathcal{Q}$, Coordinate dimensions $N$
\State \textbf{Output}: Coordinate $\bar{x}$, tuple of per-dimension coordinates
\Function{FindPartition}{$\mathcal{L}$, $\mathbf{C}$, $\mathcal{Q}$, $N$}
    \State $\bar{x} = ()$,\, $\mathcal{R} = \mathcal{Q}$ \Comment{Init current coordinate and residual cost.}
    \For{$m$ in $\mathcal{L}$} \label{line:loops} \Comment{Loop over each iteration space dimension.}
        \State $\triangleright$ Binary search $x_m \in [0, N_m)$ with $\mathbf{C}_m(x_m \mid \bar{x},) \leq \mathcal{R}$.
        \State $x_{m}$ = \textsc{HighestCoordinateLeqQuery}($N_{m}$, $\mathcal{R}$, $\mathbf{C}_{m}$, $\bar{x}$) \label{line:leq-query}
        \State $\mathcal{R} \leftarrow \mathcal{R} - \mathbf{C}_{m}(x_{m} \mid \bar{x})$ \label{line:residual-work}
        \State $\bar{x} \leftarrow \bar{x} :: x_m$ \label{line:append-partition}
    \EndFor

    \State \textbf{return} $\bar{x}$
\EndFunction
\end{algorithmic}
\end{algorithm}

\Cref{alg:hierarchical-search-v2} restricts the search for partition boundaries from the full
$d$-dimensional iteration space to increasingly smaller sub-spaces by fixing coordinates
along each dimension.
This is done by iterating over the desired loop order and computing the highest coordinate that 
has a cost less than or equal to the running query cost, $\mathcal{R}$. 
The coordinate is included as a component of the result ($\bar{x}$), and its dimension's cost is subtracted from the running query.
This pattern maintains the invariant that $\mathcal{R}$ corresponds to the remaining query cost within the restricted subspace defined by the partially formed coordinate $\bar{x}$.
At the final level, our algorithm yields a coordinate tuple that precisely identifies a partition boundary
enclosing the queried cost $\mathcal{Q}$.


\subsubsection{Asymptotics}
\label{sec:alg-analysis}

Analysis of~\Cref{alg:hierarchical-search-v2} shows that it is
$O\Big(\sum_{l \in \mathcal{L}} \log(|l|) P_l\Big)$,
where $O(P_l)$ is the asymptotic complexity of evaluating the cost function at level $l$. 
With $n = \max_{l \in \mathcal{L}}(|l|)$ and $O(\mathcal{W})= \max_{l \in \mathcal{L}}(P_l)$ for 
some $\mathcal{W}$, this runtime simplifies to $O\left(|\mathcal{L}| \mathcal{W} \log(n) \right)$.
Our implementations of cost functions (\Cref{sec:format-work-functions}), are logarithmic in the number
of non-zero elements
and $|\mathcal{L}|$ is a constant for a given kernel, so the runtime is bounded by $O\left(\log(nnz) \log(n) \right)$.

\subsubsection{Tightness of Partitioning}
\label{sec:tightness}

\begin{theorem}
\label{thrm:tightness}
Let $\Delta_m$ denote the maximum cost difference between adjacent coordinates at loop level $m$:
\[
\Delta_m = \max\limits_{\forall\bar{x}_m, 0 \leq x_m < N_m}(\mathbf{C}_m(x_m+1 \mid \bar{x}_m) - \mathbf{C}_m(x_m \mid \bar{x}_m)) 
\]
Then for any cost query,~\Cref{alg:hierarchical-search-v2} returns $\bar{x} = (x_0,\ldots,x_{d-1})$
such that $0\leq\mathcal{Q} - \mathbf{C}(\bar{x}) < \Delta_{d-1}$ where $\mathbf{C}(\bar{x})$ is the cost of coiteration from coordinate $(0,\ldots,0)$ to $(x_0,\ldots,x_{d-1})$:
\[
\mathbf{C}(\bar{x}) = \sum_{m=0}^{d-1} \mathbf{C}_m(x_m \mid x_0,\ldots,x_{m-1}).
\]

\end{theorem}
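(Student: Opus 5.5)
The plan is to track the residual $\mathcal{R}$ through the loop of \Cref{alg:hierarchical-search-v2} and prove an invariant by induction on the loop level $m$. Let $\mathcal{R}_m$ denote the residual at the start of iteration $m$, so $\mathcal{R}_0=\mathcal{Q}$ and $\mathcal{R}_{m+1}=\mathcal{R}_m-\mathbf{C}_m(x_m\mid x_0,\ldots,x_{m-1})$. Telescoping over the updates on line~\ref{line:residual-work} gives immediately
\[
\mathcal{R}_d=\mathcal{Q}-\sum_{m=0}^{d-1}\mathbf{C}_m(x_m\mid x_0,\ldots,x_{m-1})=\mathcal{Q}-\mathbf{C}(\bar{x}).
\]
The theorem therefore reduces to showing $0\le\mathcal{R}_d<\Delta_{d-1}$.

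The invariant I would prove is: for every $m$, $0\le\mathcal{R}_m\le\mathbf{C}_m(\mathbf{N}_m\mid x_0,\ldots,x_{m-1})$, with the convention that for $m=0$ the right side is the total cost $\mathcal{Q}^*$. In other words, the residual never exceeds the cost of the remaining subspace. The base case assumes the query satisfies $0\le\mathcal{Q}\le\mathcal{Q}^*$; I would state this assumption explicitly, since it holds for the queries $p\cdot\mathcal{Q}^*/\mathcal{P}$.

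For the inductive step, I interpret \textsc{HighestCoordinateLeqQuery} as returning the largest $x_m\in[0,\mathbf{N}_m)$ with $\mathbf{C}_m(x_m\mid\cdot)\le\mathcal{R}_m$. Monotonicity makes this well defined by binary search. Such an $x_m$ exists because $\mathbf{C}_m(0\mid\cdot)=0$ (empty range) and $\mathcal{R}_m\ge 0$. Nonnegativity of $\mathcal{R}_{m+1}$ follows directly. For the upper bound there are two cases.
\begin{itemize}
\item If $x_m<\mathbf{N}_m-1$, maximality gives $\mathbf{C}_m(x_m+1\mid\cdot)>\mathcal{R}_m$. Hierarchical consistency rewrites $\mathbf{C}_m(x_m+1\mid\cdot)-\mathbf{C}_m(x_m\mid\cdot)$ as $\mathbf{C}_{m+1}(\mathbf{N}_{m+1}\mid x_0,\ldots,x_m)$, which yields $\mathcal{R}_{m+1}<\mathbf{C}_{m+1}(\mathbf{N}_{m+1}\mid x_0,\ldots,x_m)$.
\item If $x_m=\mathbf{N}_m-1$ is the top coordinate, the inductive hypothesis gives $\mathcal{R}_m\le\mathbf{C}_m(\mathbf{N}_m\mid\cdot)$. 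Hierarchical consistency at $x_m=\mathbf{N}_m-1$ then gives the non-strict bound.
\end{itemize}

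At the last level $m=d-1$ we need a strict inequality $\mathcal{R}_d<\mathbf{C}_{d-1}(x_{d-1}+1\mid\cdot)-\mathbf{C}_{d-1}(x_{d-1}\mid\cdot)\le\Delta_{d-1}$. In the first case maximality gives this directly, with no need for a level $d$. The main obstacle is the boundary case $x_{d-1}=\mathbf{N}_{d-1}-1$ with $\mathcal{R}_{d-1}$ exactly equal to the full cost of the subspace. Here only $\le$ is available, and the search should really have advanced to the next coordinate at an outer level.

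I would handle this by showing that the strict inequality propagates. In the first case at any level the bound is strict. The non-strict case can only persist if every earlier level was also at its top coordinate with residual equal to the full cost, which forces $\mathcal{Q}=\mathcal{Q}^*$. So for $\mathcal{Q}<\mathcal{Q}^*$ I would strengthen the invariant to $\mathcal{R}_m<\mathbf{C}_m(\mathbf{N}_m\mid\cdot)$, which makes the top-coordinate case strict as well. I would then treat $\mathcal{Q}=\mathcal{Q}^*$ separately, either by letting the search range include $\mathbf{N}_m$ (so $\bar{x}$ is the end sentinel and $\mathcal{R}_d=0$), or by noting that the final processor's upper bound is the end of the iteration space by convention.
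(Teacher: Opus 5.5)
Your proposal is correct and takes the same route as the paper, whose entire proof is ``by induction on $m$''. Your residual invariant $0 \le \mathcal{R}_m \le \mathbf{C}_m(N_m \mid x_0,\ldots,x_{m-1})$, telescoped to $\mathcal{R}_d = \mathcal{Q} - \mathbf{C}(\bar{x})$ and closed at level $d-1$ via maximality and $\Delta_{d-1}$, is exactly that induction made explicit. Your boundary handling is also warranted rather than a gap: ``for any cost query'' genuinely needs $0 \le \mathcal{Q} < \mathcal{Q}^*$. Take $d=1$ with a single nonzero at coordinate $N_0-1$: the query $\mathcal{Q}=\mathcal{Q}^*$, searched over $[0,N_0)$, gives $\mathcal{Q} - \mathbf{C}(\bar{x}) = \Delta_0$. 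Your strengthened strict invariant correctly covers every other query.
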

\begin{proof}
By induction on $m$.
\end{proof}


\Cref{thrm:tightness} implies that for a partition defined by lower and upper bound cost query values
$\mathcal{Q}_l$ and $\mathcal{Q}_u$,~\Cref{alg:hierarchical-search-v2} returns 
coordinates $\bar{x}_l$ and $\bar{x}_u$ such that the cost of coiteration 
from $\bar{x}_l$ to $\bar{x}_u$ lies in the range 
$\mathcal{Q}_u - \mathcal{Q}_l \pm \Delta_{d-1}$.
Consequently, partitions constructed from 
equally spaced query values incur a maximum load imbalance of at most 
$2\cdot\Delta_{d-1}$. 
Furthermore, our implementation of $\mathbf{C}$ 
(\Cref{sec:format-work-functions}) ensures that $\Delta_{d-1}$ is small (bounded by the number of tensors being coiterated), 
thereby guaranteeing that parallel processors execute approximately equal amounts of work.


\subsection{Recursive Partitioning}
\label{sec:recursive-partitioning}

\Cref{sec:hierarchical-search} describes how to compute partitions that balance the number of 
non-zeros processed in each partition equally across parallel workers.
%
However, the hierarchical algorithm is insufficient for partitioning nested intersections on sparse tensors.
This is because the hierarchical search algorithm and cost functions assume that all 
non-zeros within a candidate coordinate range are traversed, which is true only when unions are being 
computed or if a sparse intersection is performed only in the innermost loop.
When iteration over nested sparse iterations is required, such as in the Hadamard product
of DCSR matrices (\Cref{lst:emul-dcsr2-cfir}), entire segments of the
iteration space are skipped when higher-order coordinates in an intersected dimension do not match, as illustrated in~\Cref{fig:dcsr-mul-partition}.
Nested sparse intersections require changing how costs are computed to exclude the cost
of hierarchically skipped portions of the iteration space.



Because the exact coordinates that are skipped during iteration with intersections are only
known dynamically, intersections in outer loops must be computed first to discover the corresponding
costs of intersections within inner loops.
Therefore, we propose an iterative approach that progressively computes new cost functions on the
remaining portions of the coordinate space after discovering which coordinates can be skipped.
Furthermore, these functions can be computed while coiterating through each sparse intersection
in a load-balanced manner.


\newcommand{\registernode}[1]{%
  \expandafter\def\csname present@#1\endcsname{1}%
}

\newcommand{\nodewithlabel}[7]{%
  \if1#1%
    \node[
      draw, circle,
      minimum size=4.5mm,
      inner sep=0.5pt,
      font=\scriptsize,
      fill=#7
    ] (#6) at (#2,#3) {$#4_{#5}$};%
  \else
    \node[
      draw, circle, dotted,
      minimum size=4.5mm,
      inner sep=0.5pt,
      font=\scriptsize,
      fill=white,
    ] (#6) at (#2,#3) {$#4_{#5}$};%
  \fi
}

\newcommand{\drawcoordinatetree}[9]{%
    \begingroup
  \forcsvlist{\registernode}{#9}

  \if4#1
    \node[anchor=west, font=\small, draw, rectangle] (#4) at (1.6, 0.8) {$#4$};
  \else
    \node[anchor=west, font=\small, draw, rectangle] (#4) at (-1.1 + #7, 0.8) {$#4$};
  \fi

  \if4#1
        \fill[color2]
            (-2.7,-1.1) -- (-2.7,-0.4) -- (-0.9,-0.4) -- (-0.9,-1.1) -- cycle;
        \fill[color4]
            (2.7,-1.1) -- (2.7,-0.4) -- (0.9,-0.4) -- (0.9,-1.1) -- cycle;
    \else 
        \fill[color2]
            (-2.7+#7,-1.1) -- (-2.7+#7,0.3) -- (-0.9+#7,0.3) -- (-0.9+#7,-1.1) -- cycle;
        \fill[color4]
            (0.9+#7,-1.25) -- (0.9+#7,0.45) -- (-0.9+#7,0.45) -- (-0.9+#7,-1.25) -- cycle;
        \fill[color4]
            (-2.85+#7,0.3) -- (-2.85+#7,0.45) -- (-0.9+#7,0.45) -- (-0.9+#7,0.3) -- cycle;
        \fill[color4]
            (-2.85+#7,-1.1) -- (-2.85+#7,-1.25) -- (-0.9+#7,-1.25) -- (-0.9+#7,-1.1) -- cycle;
        \fill[color4]
            (-2.85+#7,0.3) -- (-2.85+#7,-1.1) -- (-2.7+#7,-1.1) -- (-2.7+#7,0.3) -- cycle;
    \fi

  \foreach \ii in {0,...,#1} {
    \pgfmathsetmacro{\xI}{(\ii-1)*#5 + #7}

    \xdef\Ipresent{0}
    \foreach \jj in {0,...,#2} {
      \foreach \kk in {0,...,#3} {
        \ifcsname present@\ii/\jj/\kk\endcsname
          \xdef\Ipresent{1}
        \fi
      }
    }
    \edef\ilabel{i'}
    \if4#1
        \xdef\ilabel{i}
    \fi

    \nodewithlabel{\Ipresent}{\xI}{0}{\ilabel}{\ii}{inode\ii}
    {white}

    \if1\Ipresent
        \draw[thin, black] (#4) -- (inode\ii);
    \else
        \draw[thin,black, dotted] (#4) -- (inode\ii);
    \fi

    \foreach \jj in {0,...,#2} {
      \pgfmathsetmacro{\xJ}{\xI + (\jj - (#2)/2.0)*#6}
      \pgfmathsetmacro{\yJ}{-#8}

      \xdef\Jpresent{0}
      \foreach \kk in {0,...,#3} {
        \ifcsname present@\ii/\jj/\kk\endcsname
          \xdef\Jpresent{1}
        \fi
      }

      \if1\Ipresent
          \if1\Jpresent
            \draw[thin, black] (inode\ii) -- (\xJ, \yJ);
          \else
            \draw[thin,black, dotted] (inode\ii) -- (\xJ, \yJ);
          \fi
      \else
        \draw[thin,black, dotted] (inode\ii) -- (\xJ, \yJ);
        \fi
    
    \nodewithlabel{\Jpresent}{\xJ}{\yJ}{j}{\jj}{jnode\ii-\jj}{white}
    }
  }

    \endgroup
    



}

\newcommand{\drawtwotreesB}[9]{%
  \drawcoordinatetree
    {#1}
    {#2}
    {#3}
    {#4}
    {#5}
    {#6}
    {#7}
    {#8}
    {#9}
}
\newcommand{\drawtwotrees}[9]{%
  \drawcoordinatetree
    {#1}
    {#2}
    {#3}
    {#4}
    {#5}
    {#6}
    {#7}
    {#8}
    {#9}
  \vspace{12pt}%
  \drawtwotreesB%
}

\begin{figure}[b]
\begin{subfigure}{0.46\textwidth}
\centering
\begin{tikzpicture}[scale=0.8, transform shape]
  \drawcoordinatetree
  {4}{2}{1}{A}{1.8}{0.6}{0}{0.8}
  {0/1/0, 0/2/0, 2/0/0, 2/2/0, 4/0/0, 4/1/0}
\end{tikzpicture}

\vspace{3pt}

\begin{tikzpicture}[scale=0.8, transform shape]
  \drawcoordinatetree
  {4}{2}{1}{B}{1.8}{0.6}{0}{0.8}
  {0/0/0, 0/1/0, 0/2/0, 2/0/0, 3/1/0, 3/2/0}
\end{tikzpicture}

\begin{tikzpicture}[scale=0.8, transform shape]
        \fill[color2] (0,0) rectangle (0.4,0.35);
        \node[anchor=west, font=\small] at (0.5, 0.125) {$\mathbf{C}_j(j_3 \mid i_0)$};
        \fill[color4] (2.5,0) rectangle (2.9,0.35);
        \node[anchor=west, font=\small] at (3.0, 0.125) {$\mathbf{C}_j(j_3 \mid i_2)$};
\end{tikzpicture}

\subcaption{The loop nest $i,j$ contains a sparse intersection at outer level $i$. We apply recursive partitioning to map $i, \mathbf{C}_i$ to $i', \mathbf{C'}_{i}$.}
\end{subfigure}%
\vfill
\vspace{1em}
\begin{subfigure}{0.46\textwidth}
\centering
\begin{tikzpicture}[scale=0.8, transform shape]
  \drawcoordinatetree
  {1}{2}{1}{A}{1.8}{0.6}{-2.4}{0.8}
  {0/1/0, 0/2/0, 1/0/0, 1/2/0}
  \drawcoordinatetree
  {1}{2}{1}{B}{1.8}{0.6}{2.4}{0.8}
  {0/0/0, 0/1/0, 0/2/0, 1/0/0}
\end{tikzpicture}
\begin{tikzpicture}[scale=0.8, transform shape]
        \fill[color2] (0,0) rectangle (0.4,0.35);
        \node[anchor=west, font=\small] at (0.5, 0.125) {$\mathbf{C}_i'(i'_1) = \mathbf{C}_j(j_3 \mid i_0)$};
\end{tikzpicture}
\begin{tikzpicture}[scale=0.8, transform shape]
        \fill[color4] (2.5,0) rectangle (2.9,0.35);
        \node[anchor=west, font=\small] at (3.0, 0.125) {$\mathbf{C}_i'(i'_2) = \mathbf{C}_j(j_3 \mid i_0) + \mathbf{C}_j(j_3 \mid i_2)$};
\end{tikzpicture}
\subcaption{Modified loop nest $i', j$ does not have hierarchical skipping. The remapped cost function $\mathbf{C'}_{i}$ is calculated from $\mathbf{C}_j$.}
\end{subfigure}
\caption{Recursive partitioning for Hadamard product. 
}
\label{fig:coordinate-tree-recurssion}
\end{figure}
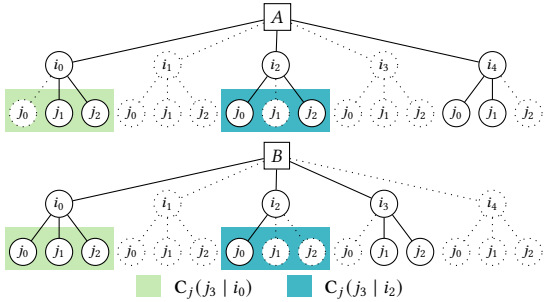
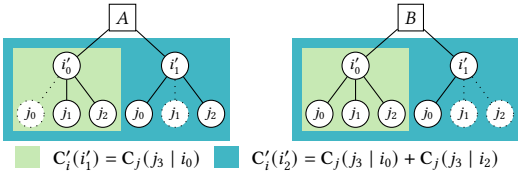

\subsubsection{Remapped Cost Functions}
\sys{} reasons about hierarchical skipping by \emph{remapping} cost functions
into cost functions that are aware of intersection's effects on coiteration.
A remapped cost function considers the cost of coiteration over a given subset of the coordinate space.
For example, a remapped cost function for~\Cref{lst:emul-dcsr2-cfir} can be computed over the valid 
$i$'s that are in $\texttt{rows}(A) \, \cap \, \texttt{rows}(B)$, where $\mathds{1}_p$ denotes the indicator function
over the predicate $p$:
\[
    \mathbf{C}'_{i}(x_{i}) = \sum_{r=0}^{x_i} \mathbf{C}_{j}(N_{j}\mid r) \cdot \mathds{1}_{r \in \texttt{rows}(A) \, \cap \, \texttt{rows}(B)}
\]
$\mathbf{C}'_{i}$ can be used to load-balance~\Cref{lst:emul-dcsr2-cfir} despite the hierarchical skipping: it only considers the cost of coordinate sub-spaces that must be coiterated over after the outermost intersection, precisely
describing the costs of iteration in the inner loop, as illustrated in~\Cref{fig:coordinate-tree-recurssion}.
%
More generally, we define a remapped cost function that considers iteration over a subset of the dimension $m$ given by the set $I$:
\[
    \mathbf{C}_{m}'\left(x_m \mid \bar{x}_m\right) = \sum_{r=0}^{x_m} \mathbf{C}_{m+1}(N_{m+1} \mid \bar{x}_m, r)\cdot\mathds{1}_{r \in I}
\]

\subsubsection{Efficiently Computing Remapped Cost Functions}
The sum in a remapped cost function over the subset $I$ can be memoized by iterating $I$ and independently computing the cost function $\mathbf{C}_{m+1}$ for each valid coordinate, then taking the prefix sum of the stored values.
For example,~\Cref{lst:emul-dcsr2-cfir} is rewritten into~\Cref{lst:emul-dcsr2-rewritten}, where each loop is load-balanced.
The prefix sum can be computed efficiently in parallel~\cite{BlellochTR90, merrill2016single}.

\begin{figure}
    \centering
\centering
\begin{lstlisting}[language=cfir, mathescape=true, basicstyle=\footnotesize\ttfamily, backgroundcolor=\color{codebackground}, frame=single,
caption={Computing $Z_{ij} = A_{ij}\cdot B_{ij}$ on DCSR matrices via multiple load-balanced kernels.},
label={lst:emul-dcsr2-rewritten},
numbers=left,
xleftmargin=1.5em,
escapeinside={(*}{*)} % Allows LaTeX commands inside the code
]
// Load balance on just loop $i$.
while $i~\shortleftarrow$ rows($A$) $\cap$ rows($B$): (*\label{line:loop-i-while}*)
  $T[i]$ = $\mathbf{C}_{j}(N_{j} \mid i)$ (*\label{line:assignT}*)
$T'$ = exclusive_prefix_sum($T$)  (*\label{line:sumT}*)
let $\mathbf{C}_{i}'(x_i)$ = $T'[x_i]$  (*\label{line:assignW}*)
// Load balance using $\mathbf{C}_{i}'$ and $\mathbf{C}_{j}$.
for $i~\shortleftarrow~i_T$:  (*\label{line:loop-i-for}*)
  while $j~\shortleftarrow$ cols($A[i]$) $\cap$ cols($B[i]$): (*\label{line:loop-j}*)
    $Z[i, j]$ = $A[i, j]$ * $B[i, j]$ (*\label{line:assignZ}*)
\end{lstlisting}
\end{figure}

\subsubsection{Applying Recursive Partitioning}

The structure of recursive partitioning is given in~\Cref{alg:recursive-search-alg}, where the base case (\Cref{alg:hierarchical-search-v2}) is applied if a loop nest contains no sparse intersections or only an innermost sparse intersection. 
If an outer sparse intersection is found, the body of the sparse intersection is replaced with the computation of the remapped cost function.
This modified loop nest is partitioned and computed in parallel, giving a precise cost model for the partitioning of the body of the loop. 
This recursive structure results in a series of load-balanced parallel loops.

\begin{algorithm}
\caption{Recursive Partitioning Algorithm}
\small
\label{alg:recursive-search-alg}
\begin{algorithmic}[1]
\State \textbf{Input}:  Loop order $\mathcal{L}$, Cost Functions $\mathbf{C}_{m}$ for $m \in \mathcal{L}$, Cost $\mathcal{Q}$
\State \textbf{Output}: Coordinate $\bar{x}$, tuple of per-dimension coordinates
\Function{FindRecursivePartition}{$\mathcal{L}$, $\mathbf{C}$, $\mathcal{Q}$}
    \State $m = \textsc{FindOutermostSparseIntersection}(\mathcal{L})$
        
    \If{$m$ is not \texttt{null} and $m$ is not innermost loop}
        \State \textcolor{blue}{$\triangleright$ Hierarchical skipping, remap cost function.}
        \State $\mathbf{C}'_m, m'$ = $\textsc{MapIntersectionToNewSpace}$($C_m$,
        $m$)
        \State $\mathbf{C}[m] = \mathbf{C}'_m$,  $\mathcal{L}[m] = m'$
        \State  $\textbf{return}$ \textsc{FindRecursivePartition}($\mathcal{L}$, $\mathbf{C}$, $\mathcal{Q}$)
    \Else
        \State \textcolor{blue}{$\triangleright$ No hierarchical skipping, do standard partitioning.}
        \State $\textbf{return}$ $\textsc{FindPartition}$($\mathcal{L}$, $\mathbf{C}$, $\mathcal{Q}$)
    \EndIf
\EndFunction
\end{algorithmic}
\end{algorithm}





\section{Code Generation}
\label{sec:codegen}

\sys{} is implemented as an extension of the \taco compiler that uses the partitioning algorithms
in~\Cref{sec:partition} to efficiently parallelize coiterative loops.
Our points of extension are illustrated in~\Cref{fig:nacho-lowering}. 
\sys generates three parallel kernels per load-balanced loop nest: a \emph{partitioning} kernel, an 
\textit{assembly} kernel that allocates the data structures for sparse result tensors, and a 
\textit{compute} kernel that fills in the sparse output allocated by the assembly kernel.
Both the assembly and compute kernels use the partitions computed by the partitioning kernel.
We primarily build off of \taco~\cite{kjolstad2017taco, chou2018formats, chou2020conversion}
and provide the relevant background to understand our contributions. 
We refer the reader to these works or to \citet{kjolstad2020thesis} for more thorough descriptions of \taco.

We first briefly describe the requirements of a parallel backend, and provide our technique for generating parallel partitioning kernels in~\Cref{sec:par-partition} with our implementation of data-structure-specific cost functions in~\Cref{sec:format-work-functions}. 
Lastly, we show how the computed partitions are used for both the assembly and compute phases
in~\Cref{sec:par-assembly,sec:par-compute}.

\subsection{Backend-Specific Code Generation}
\label{sec:codegen-backend}

In the following sections we describe a parallel backend assuming a shared-memory
architecture and five capabilities: parallel loops, bulk memory allocation, prefix sums, and for scatter kernels (see~\Cref{sec:par-assembly}), sorting and segmented summation.
These are standard capabilities provided by vendor libraries~\cite{cccl, onetbb}.
\IGNORE{
\begin{description}
    \item[Parallel Loops:] Implemented by CUDA kernels and \texttt{tbb}'s \texttt{parallel\_for}.

    \item[Bulk Memory Allocation:] \texttt{cudaMallocAsync} for GPU and \texttt{malloc} for CPU.

    \item[Prefix Sum:] Implemented by \texttt{cub}'s \texttt{ExclusiveSum}~\cite{merrill2016single, cccl} for GPUs and and \texttt{std::exclusive\_scan} for CPUs.

    \item[Sorting:] We use cub's \texttt{DeviceSegmentedSort}~\cite{cccl} for GPUs and \texttt{tbb}'s \texttt{parallel\_sort} on CPUs.

    \item[SegmentedSum:] \TODO{}
\end{description}
}
We believe \sys can be extended to other parallel backends that support these capabilities.


\begin{figure}[t]
    \centering
\begin{tikzpicture}[
  mainbox/.style = {
    draw, rectangle, rounded corners = 4pt,
    minimum width  = 2.5cm,
    minimum height = 0.75cm,
    align  = center,
    font   = \footnotesize\bfseries
  },
  subbox/.style = {
    draw, rectangle, rounded corners = 3pt,
    minimum width  = 2.0cm,
    minimum height = 0.7cm,
    align  = center,
    font   = \footnotesize
  },
  arr/.style  = {-{Stealth[length=6pt, width=5pt]}, thick},
  darr/.style = {-{Stealth[length=5pt, width=4pt]}, thick, dashed},
  lbl/.style  = {font = \scriptsize, align = center}
]

  \node[mainbox] (TIN) at (-2.0, 0)
    {Tensor Index\\Notation~\cite{kjolstad2017taco}};
  \node[mainbox] (FD)  at ( 2.0, 0)
    {Format\\Description~\cite{chou2018formats}};

  \node[mainbox] (CFIR) at (0, -1.5) {Coiterative\\Loops};

  \node[draw, rectangle, rounded corners = 4pt,
        minimum width = 7.8cm, minimum height = 1.3cm]
        (CPP) at (0, -3.5) {};
  \node[font = \small\bfseries, anchor = north, inner sep = 3pt]
        at (CPP.north) {C++};
  \node[subbox, fill=gray!30]
        (PAR)  at ($(CPP.center) + (-2.5, -0.2)$) {Partitioning\\$~\Cref{sec:par-partition}$};
  \node[subbox, pattern=north east lines, pattern color=gray!40]
        (ASM)  at ($(CPP.center) + ( 0.0, -0.2)$) {Assembly\\$~\Cref{sec:par-assembly}$};
  \node[subbox, pattern=north east lines, pattern color=gray!40]
        (COMP) at ($(CPP.center) + ( 2.5, -0.2)$) {Compute\\$~\Cref{sec:par-compute}$};

  
  \draw[arr] (TIN.south) -- coordinate[midway] (conc-mid)
    node[lbl, left, midway, xshift=-2pt] {Concretization}
    (CFIR.north);

  \draw[arr] (CFIR.south) -- coordinate[midway] (cg-mid)
    node[lbl, left, midway, xshift=-2pt, pattern=north east lines, pattern color=gray!40] {Code\\Generation}
    (CPP.north);

  \draw[arr] (FD.south) |- (cg-mid);

  \coordinate (fd-branch) at (FD.south |- conc-mid);
  \fill (fd-branch) circle (2pt);
  \draw[darr] (fd-branch) -- 
    node[lbl, above, midway, yshift=-2pt] {Format Properties} 
    (conc-mid);

\end{tikzpicture}
    \caption{
    \sys code generation. Gray boxes denote modifications to prior work in \taco{}. We intercept the lowering of coiterative loops to generate partition functions and modify the iteration bounds used in compute and assembly stages.
    }
    \label{fig:nacho-lowering}
\end{figure}
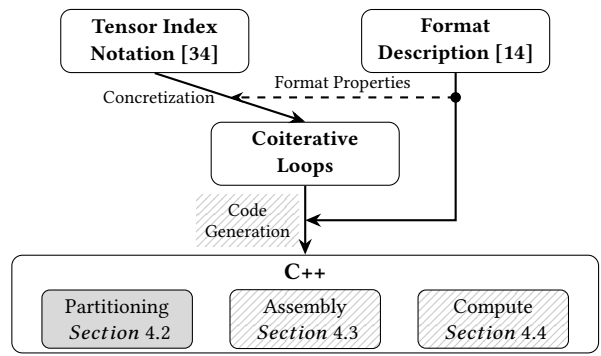

\subsection{Parallel Partitioning}
\label{sec:par-partition}

To produce a partitioning kernel, \sys specializes~\Cref{alg:hierarchical-search-v2} with a concrete loop order
(selected with the algorithm of~\citet{yan2026scorch}) and cost functions generated with respect to the
data structure formats of the input tensors.
%
%
The cost functions are implemented as a sum of the number of non-zeros contained by each tensor operand.
%
%
The number of non-zeros up to a coordinate for a specific operand is computed via data-structure-specific queries, which we describe in~\Cref{sec:format-work-functions}.
As part of specialization, we perform an optimization to search over sparse tensors
and store their partitions by using the \emph{position} space (iterators into sparse
data structures), even though the searches are logically over the \emph{coordinate} space.
%
We first provide a brief background on \taco formats, as they relate to our code generation.

\subsubsection{Background: \textsc{Taco} Formats}

\taco decouples the compute specification from the \textit{format} language, which describes the
data structure used to store each tensor dimension~\cite{chou2018formats}.
For example, the Compressed Sparse Row (CSR)~\cite{tinney1967csr} format is expressed as a Dense-Compressed row-major matrix.
%
~\citet[Section 2]{chou2018formats} surveys tensor storage formats; \sys supports the Compressed, Dense, and Singleton formats. 
These are \textit{sorted} formats, which enable \sys to perform binary searches over the data structures.
We believe that our choice of cost function is applicable to a wider variety of formats, such as those described by~\citet{ahrens2025finch}, as long as those formats can support the efficient calculation of the number
of non-zero values within a range of coordinates.

\subsubsection{Format-Specific Cost Functions}
\label{sec:format-work-functions}

Our cost function implementations use the sum of the number of non-zeros contained
in each sparse tensor within the queried coordinate subspace, which satisfies the
properties defined in~\Cref{sec:work-funcs}.
%
%
This approximates the runtime of each partition, but future work could consider cardinality estimators~\cite{haas1996cardinality, salihoglu2026cardinalitygraphs, vengerov2015joinsizeestimation} or 
a finer-grained estimation of memory traffic. 


\begin{figure}[t]
    \centering
    \begin{minipage}{0.48\textwidth}
\begin{lstlisting}[language=cpp, mathescape=true, backgroundcolor=\color{codebackground}, frame=single,
caption={Generated cost functions for CSR matrix $A_{ij}$ over loop nest $i\to j$.},
label = {lst:csr-work-functions}% , numbers=left
]
int C_i(int x_i, CSR A) {
 return A.pos[x_i] - A.pos[0];
}
int C_j(int x_i, int x_jp, CSR A) {
 return x_jp - A.pos[x_i];
}
\end{lstlisting}
\end{minipage}
\vfill
\begin{minipage}{0.48\textwidth}
\begin{lstlisting}[language=cpp, mathescape=true, backgroundcolor=\color{codebackground}, frame=single,
caption={Generated cost functions for CSR matrix $A_{ij}$ over loop nest $i\to k \to j$. $A$ is a broadcasted over level $k$.}, 
label={lst:csr-work-functions-broadcast}% , numbers=left
]
int C_i(int x_i, CSR A) {
 return (A.pos[x_i] - A.pos[0]) * N_k;
}
int C_k(int x_i, int x_k, CSR A) {
 return (A.pos[x_i+1] - A.pos[x_i]) * x_k
}
int C_j(int x_i, int x_k, int x_jp, CSR A) {
 return x_jp - A.pos[x_i]; // Same as $\text{\Cref{lst:csr-work-functions}}$
}
\end{lstlisting}
\end{minipage}
\end{figure}

\sys{} generates a cost function for each level of each sparse tensor in the compiled tensor expression.
Our code generation uses the query scheme described by~\citet{chou2020conversion} to generate
format-specific implementations to count the number of non-zeros included within the cost function query.
%
%
Instead of accepting coordinates, the cost functions instead accept \emph{iterators} into the
dimensions of sparse tensors that reference the physical locations within the sparse data
structures referring to a logical coordinate.
%
\Cref{lst:csr-work-functions} shows examples for the CSR data structure:
the cost function for the outer dense level returns the number of non-zeros in the level below it,
while the cost function for the compressed inner level subtracts the offset of the start of that row from the iterator that corresponds to the coordinate.

When an iteration space dimension does not appear in a tensor's index expression,
such as the reference to $x_j$ when iterating over the $i$ dimension of
the expression $y_i = A_{ij} \cdot x_{j}$, we refer to the tensor as \emph{broadcasted}
over that dimension.
Broadcasted dimensions induce repeated coiteration over the tensor, which
affects cost estimation.
As illustrated in~\Cref{lst:csr-work-functions-broadcast}, \sys{} generates cost functions for
broadcasted dimensions that scale the cost by the appropriate factor.

\subsubsection{Specializing the Search for Partitions}
\sys{} generates a specialized implementation of~\Cref{alg:hierarchical-search-v2}
for the target expression and sparse tensor data structures.
\sys{} inlines cost function invocations and unrolls the main loop over 
the ordering of iteration space dimensions.
The result of specializing~\Cref{alg:hierarchical-search-v2} for the three-way sparse 
vector co-iteration example (\Cref{fig:vec_lb_example}) is shown in \Cref{lst:partition-pos-space}.
The generated code contains an outer binary search (Line~\ref{listing:outer-binary-search}) for the $i$ coordinate 
with cost less than the input query $q$.
Each step of the outer binary search then invokes the cost function for each sparse vector,
which results in inner binary searches over the non-zero elements in each sparse vector (lines~\ref{listing:cost-func-begin}---\ref{listing:cost-func-end}).
\sys{} also tracks the searched position ranges within each
compressed data structure to limit the range of positions searched in each inner 
binary search as the outer binary search progresses (lines~\ref{listing:pos-track-begin}---\ref{listing:pos-track-end}).
Finally, \sys{} stores the results of the partitioning search so that
the results can be reused in the assembly and compute stages (line~\ref{listing:save-pos-ranges}).

\begin{figure}[t]
\centering
\begin{lstlisting}[language=cpp, mathescape=true, frame=single, backgroundcolor=\color{codebackground}, 
numbers=left,
xleftmargin=1em,
caption={\Cref{alg:hierarchical-search-v2} specialized to coiteration of three sparse vectors, which generates the partitions in \Cref{fig:vec_lb_example}.},
label={lst:partition-pos-space},
xleftmargin=2em
]
int partition(SpVec A, SpVec B, SpVec C, int q,
              int N, Parts &p) {
 int low = 0, high = N, lA = 0, lB = 0, lC = 0;
 int hA = A.nnz-1, hB = B.nnz-1, hC = C.nnz-1;
 while (low < high) {$\label{listing:outer-binary-search}$
  int i = low + (high - low + 1) / 2;
  // Find least ipA in [lA, hA] s.t. A[ipA] >= i. $\label{listing:cost-func-begin}$
  int ipA = lb_search(A, i, lA, hA);
  int ipB = lb_search(B, i, lB, hB);
  int ipC = lb_search(C, i, lC, hC);
  // Compute cost = C_i(i,A) + C_i(i,B) + C_i(i,C).
  int cost = ipA + ipB + ipC;$\label{listing:cost-func-end}$
  if (cost <= q)$\label{listing:pos-track-begin}$
   low = i, lA = ipA, lB = ipB, lC = ipC;
  else 
   high = i - 1, hA = ipA, hB = ipB, hC = ipC;$\label{listing:pos-track-end}$
 }
 p.ipA = lA, p.ipB = lB, p.ipC = lC;$\label{listing:save-pos-ranges}$
 return low;
}
\end{lstlisting}
\end{figure}

\subsubsection{Partitioning Kernel per Sparse Intersection}
A partitioning kernel is generated once per loop nest isolated by~\Cref{alg:recursive-search-alg}.
For example, a CSR element-wise operation produces the kernels illustrated in~\Cref{fig:union_kernels}, but a DCSR element-wise multiplication (nested sparse intersections, e.g., \Cref{lst:emul-dcsr2-cfir}), produces two partitioning kernels, one for load balancing each sparse intersection.
We illustrate the flow of kernels for DCSR element-wise multiplication in~\Cref{fig:dcsr_kernels}.
This load balances each stage of~\Cref{lst:emul-dcsr2-rewritten}: the first partitioning kernel load balances the outer intersection (Lines~\ref{line:loop-i-while}--\ref{line:assignT}), and produces partitions for the assembly and compute kernels to iterate over the outer intersection and compute the remapped cost function.
The second partitioning kernel uses the remapped cost function (Lines~\ref{line:sumT}--\ref{line:assignW}) to load balance the inner intersection (Lines~\ref{line:loop-i-for}--\ref{line:assignZ}), producing partitions for the final assembly and compute functions.

\begin{figure}[tpb]
    \centering

    \begin{subfigure}[b]{0.58\linewidth} 
        \centering
        \resizebox{\linewidth}{!}{
        \begin{tikzpicture}[
            node distance=0.8cm,
            box/.style={rectangle, draw, thick, minimum width=1cm, minimum height=0.5cm, font=\small\sffamily},
            arrow/.style={->, thick}
        ]
            \node[box, fill=gray!40] (partition) {Partition};
            \node[box, below=0.6cm of partition, fill=gray!40] (assembly) {Assembly};

            \coordinate[below=1.2cm of assembly.south] (mid);
            \node[box, left=0.4cm of mid, anchor=east] (alloc) {Allocation};
            \node[box, right=0.4cm of mid, anchor=west, fill=gray!40] (prefix) {Prefix Sum};

            \node[box, below=1.2cm of mid, fill=gray!40, anchor=north] (compute) {Compute};

            \draw[arrow] (partition) -- node[right, font=\sffamily\footnotesize] {Partitions} (assembly);

            \coordinate (split) at ([yshift=-0.5cm]assembly.south);
            \draw[thick] (assembly.south) -- node[left, font=\sffamily\footnotesize] {Output sizes} (split);
            \draw[arrow] (split) -| (alloc.north);
            \draw[arrow] (split) -| (prefix.north);

            \coordinate (merge) at ([yshift=0.6cm]compute.north);
            \draw[thick] (alloc.south) -- node[left, font=\sffamily\footnotesize, align=center, yshift=-8pt, xshift=2pt, inner sep=1pt] {Output\\[-2pt]pointers} (alloc.south |- merge) -- (merge);
            \draw[thick] (prefix.south) -- node[right, font=\sffamily\footnotesize, align=center, yshift=-6pt, inner sep=1pt] {Write\\[-2pt]offsets} (prefix.south |- merge) -- (merge);
            \draw[arrow] (merge) -- (compute.north);

            \draw[arrow] (partition.west) -- ++(-1.75,0) coordinate (p1)
                         -- node[right, font=\sffamily\footnotesize, yshift=53pt] {Partitions} (p1 |- compute.west)
                         -- (compute.west);

            \begin{scope}[on background layer]
                \node[draw, dotted, thick, rounded corners=6pt,
                      inner sep=6pt,
                      fit={(partition) (assembly) (alloc) (prefix) (compute) (p1)}] {};
            \end{scope}
        \end{tikzpicture}%
        }
        \caption{Sparse union or innermost-loop sparse intersection.}
        \label{fig:union_kernels}
    \end{subfigure}\hfill
    \begin{subfigure}[b]{0.4\linewidth} 
        \centering
        \hspace*{-1em}
        \begin{tikzpicture}[scale=0.89, 
            node distance=0.8cm,
            box/.style={rectangle, draw, thick, minimum width=1cm, minimum height=0.5cm, font=\small\sffamily},
            arrow/.style={->, thick}
        ]
            \begin{scope}[scale=0.45, transform shape]
                \node[box, fill=gray!40, anchor=north] (mpart_i) at (0,0) {};
                \node[box, below=0.6cm of mpart_i, fill=gray!40] (masst_i) {};

                \coordinate[below=1.2cm of masst_i.south] (mmid_i);
                \node[box, left=0.4cm of mmid_i, anchor=east] (malloc_i) {};
                \node[box, right=0.4cm of mmid_i, anchor=west, fill=gray!40] (mpref_i) {};

                \node[box, below=1.2cm of mmid_i, fill=gray!40, anchor=north] (mcomp_T) {};

                \draw[arrow] (mpart_i) -- (masst_i);
                \coordinate (msplit_i) at ([yshift=-0.5cm]masst_i.south);
                \draw[thick] (masst_i.south) -- (msplit_i);
                \draw[arrow] (msplit_i) -| (malloc_i.north);
                \draw[arrow] (msplit_i) -| (mpref_i.north);
                \coordinate (mmerge_i) at ([yshift=0.6cm]mcomp_T.north);
                \draw[thick] (malloc_i.south) -- (malloc_i.south |- mmerge_i) -- (mmerge_i);
                \draw[thick] (mpref_i.south) -- (mpref_i.south |- mmerge_i) -- (mmerge_i);
                \draw[arrow] (mmerge_i) -- (mcomp_T.north);
                
                \draw[arrow] (mpart_i.west) -- ++(-1.75,0) coordinate (mp1)
                             -- (mp1 |- mcomp_T.west)
                             -- (mcomp_T.west);

                \node[box, below=0.9cm of mcomp_T, fill=gray!40] (mpref_T) {};
                \draw[arrow] (mcomp_T.south) -- (mpref_T.north);

                \node[box, below=0.9cm of mpref_T, fill=gray!40] (mpart_ij) {};
                \node[box, below=0.6cm of mpart_ij, fill=gray!40] (masst_ij) {};

                \coordinate[below=1.2cm of masst_ij.south] (mmid_ij);
                \node[box, left=0.4cm of mmid_ij, anchor=east] (malloc_j) {};
                \node[box, right=0.4cm of mmid_ij, anchor=west, fill=gray!40] (mpref_j) {};

                \node[box, below=1.2cm of mmid_ij, fill=gray!40, anchor=north] (mcomp_ij) {};

                \draw[arrow] (mpart_ij) -- (masst_ij);
                \coordinate (msplit_ij) at ([yshift=-0.5cm]masst_ij.south);
                \draw[thick] (masst_ij.south) -- (msplit_ij);
                \draw[arrow] (msplit_ij) -| (malloc_j.north);
                \draw[arrow] (msplit_ij) -| (mpref_j.north);
                \coordinate (mmerge_ij) at ([yshift=0.6cm]mcomp_ij.north);
                \draw[thick] (malloc_j.south) -- (malloc_j.south |- mmerge_ij) -- (mmerge_ij);
                \draw[thick] (mpref_j.south) -- (mpref_j.south |- mmerge_ij) -- (mmerge_ij);
                \draw[arrow] (mmerge_ij) -- (mcomp_ij.north);
                
                \draw[arrow] (mpart_ij.west) -- ++(-1.75,0) coordinate (mp2)
                             -- (mp2 |- mcomp_ij.west)
                             -- (mcomp_ij.west);

                \draw[arrow] (mpref_T.south) -- (mpart_ij.north);

                \begin{scope}[on background layer]
                    \node[draw, dotted, thick, rounded corners=4pt,
                          inner sep=5pt,
                          fit={(mpart_i) (masst_i) (malloc_i) (mpref_i) (mcomp_T) (mp1)}] (box_i) {};
                    \node[font=\Huge\sffamily, anchor=west, overlay] at (box_i.east) {i loop};
                    
                    \node[draw, dotted, thick, rounded corners=4pt,
                          inner sep=5pt,
                          fit={(mpart_ij) (masst_ij) (malloc_j) (mpref_j) (mcomp_ij) (mp2)}] (box_ij) {};
                    \node[font=\Huge\sffamily, anchor=west, overlay] at (box_ij.east) {ij loops};
                \end{scope}
            \end{scope}

            \node[right, font=\sffamily\small, inner sep=1pt, xshift=6pt,
                  fill=white, rounded corners=1pt]
                at ($(mcomp_T.south)!0.5!(mpref_T.north)$) {$T$};
            \node[right, font=\sffamily\small, inner sep=1pt, xshift=4pt,
                  fill=white, rounded corners=1pt]
                at ($(mpref_T.south)!0.5!(mpart_ij.north)$) {$\mathbf{C}_i'$};

        \end{tikzpicture}
        \caption{DCSR multiplication (e.g., ~\Cref{lst:emul-dcsr2-rewritten}).}
        \label{fig:dcsr_kernels}
    \end{subfigure}

    \caption{Kernels for single and nested intersections. \textbf{(b)} connects two sets of kernels load balanced by the strategy in \textbf{(a)} with a prefix sum in between. Gray kernels are parallel.}
    \label{fig:kernels3-pipeline}
\end{figure}
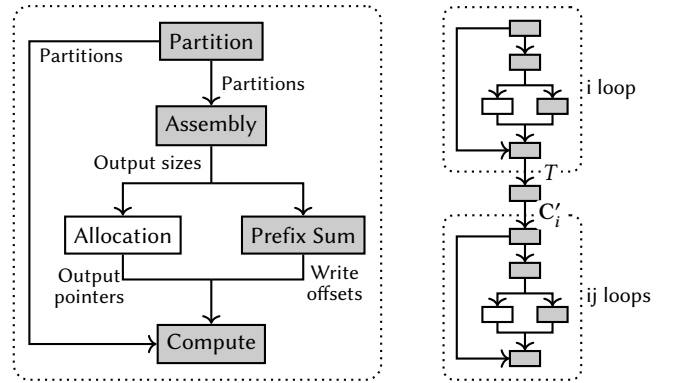

\subsection{Parallel Assembly}
\label{sec:par-assembly}

We follow the approach of prior work~\cite{kjolstad2017taco, chou2018formats, senanayake2020scheduling, yadav2022spdistal} for computing the size of sparse output dimensions: an \textit{assembly} phase symbolically executes the sparse kernel to find allocation sizes and write offsets.
The prefix sum of the thread-local counts computes write offsets for the \textit{compute} stage, which computes and writes the result.
This is a standard parallel programming practice for writing to a sparse output without additional synchronization~\cite{BlellochTR90}.
Because the partitions computed by \sys{} are irregular and not necessarily aligned
along each dimension, a slight modification to this strategy is required for hierarchical writes (e.g., writes to the row offsets vector of a CSR matrix).
To guard against duplicate counts and writes, only the parallel worker that finishes the 
iteration over a dimension performs the write or count.
%

\textbf{Appends versus Scatters.}
This parallel-writing structure is sufficient for kernels that write to the output
in the order of iteration, logically \emph{appending} to the output~\cite{ahrens2022autoscheduling}.
%
\textit{Scattering} kernels, which randomly insert into sparse outputs are incompatible with this approach,
and efficiently supporting scatter patterns has been studied~\cite{kjolstad2019workspaces, zhang2024spworkspaces, hashemian2026sparseoutput, xiao2023survey}.
%
%
However, the community has not developed a technique for \textit{parallel} sparse scatters into 
arbitrary sparse output formats.
\sys{} instead takes the expand-sort-contract (ESC)~\cite{bell2012esc, dalton2015mergepath} approach, which does not fuse reductions when scattering into a sparse output.
ESC instead materializes the full sparse output, sorts with respect to the reduction coordinate, and 
reduces the sorted intermediate.
The sort turns the reduction into a segmented reduction with an append-only output pattern, which can be parallelized via known techniques~\cite{cccl, baxter2016moderngpu}.
Our load-balancing technique is only applied to the materialization of the temporary, as in~\citet{dalton2015mergepath}.


%


\begin{figure}
    \centering
\input{figures/emul_csr2_compute_partition}
\end{figure}

\subsection{Parallel Compute}
\label{sec:par-compute}

\taco \textit{compute} kernels are modified similarly to assembly kernels: we adjust loop bounds to be based on stored partitions, perform writes via the offsets 
computed in the assembly phase, and guard higher-order writes to ensure only a single thread performs each write.
\Cref{lst:emul-csr2-partition} illustrates these changes for a CSR Hadamard product: only loop bounds (Lines~\ref{line:i-bounds-mod}--\ref{line:j-bounds-end}) and writes (Lines~\ref{line:write-offset}--\ref{line:pos-guard} and~\ref{line:write-guard-begin}) were modified.


\section{Evaluation}

To show that \sys{} effectively load balances across parallel processors, we provide evidence for the following claims:

\begin{enumerate}
    \item \sys{} load balances in the presence of skew.

    \item \sys{} achieves reasonable performance when compared to state-of-the-art handwritten kernels.


    \item Generality across \textit{data structures} improves performance.
    \item Load-balancing fused expressions improves performance over load-balanced binary expressions.
    
\end{enumerate}

Our results show that \sys-generated kernels achieve parity with hand-parallelized kernels from Intel MKL~\cite{mkl}, cuSPARSE~\cite{cusparse}, scheduled \taco{}~\cite{kjolstad2017taco, senanayake2020scheduling} kernels, and significantly outperform generic kernels and compound expressions from PyTorch Sparse~\cite{pytorch_sparse}.
%
\sys{} takes, on average, 7ms to generate the C++/CUDA kernel for each benchmark. 

\textbf{Methodology and Experimental Setup.}
We evaluate on a 24-core Intel i9-14900K, and a 24GB NVIDIA RTX 4090.
Benchmarks are pinned to the 8 performance cores of the CPU using \texttt{numactl}.
%
%
We used {\tt g++} 11.4.0 and CUDA 12.6, and compiled with {\tt -O3 -march=native}.
Each result was collected by performing a warm-up run followed by the mean of 14 runs, where
the 2 fastest and slowest runs were excluded.
%
We evaluate on real-world tensors from SuiteSparse \cite{davis2011suitesparse} and FROSTT~\cite{frosttdataset}, in addition to synthetic tensors. Benchmarks clarify which datasets they are evaluated on.

\textbf{Overhead of \textsc{Nacho} Partitioning.}
\sys{}'s partitioning is low overhead.
%
When evaluating CSR addition on the GPU across the SuiteSparse dataset, partitioning constitutes an average of $1$\% of the overall execution time (maximum $2.5$\%). In contrast, recursive partitioning for the SpGEMM kernel on the GPU constitutes an average of $7.9$\% of the total runtime (maximum $34.2$\%). Notably, this overhead includes the computation of the outer intersection required for the remapping step of our recursive partitioning algorithm, which is unavoidable even if recursive partitioning is not being used. Excluding this computation, the runtime overhead for just the two partitioning kernels in recursive partitioning for SpGEMM has an average of $2.7$\% (maximum $11.9$\%).
All presented runtimes include partitioning time.
%

\subsection{\textsc{Nacho} Load Balances in the Presence of Skew}

\begin{figure}[t]
    \centering
    \begin{subfigure}{0.48\linewidth}
        \centering
        \includegraphics[width=\linewidth, trim=0.5em 0.5em 0 1.7em, clip]{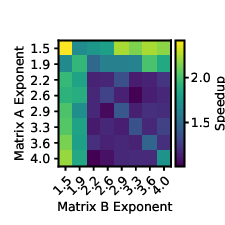}
        \caption{CPU results.}
        \label{fig:skew-csr-add-cpu}
    \end{subfigure}
    \hfill
    \begin{subfigure}{0.48\linewidth}
        \centering
        \includegraphics[width=\linewidth, trim=0.5em 0.5em 0 1.7em, clip]{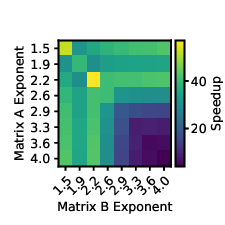}
        \caption{GPU results.}
        \label{fig:skew-csr-add-gpu}
    \end{subfigure}
    \caption{Speedup over \taco{} of CSR addition ($A + B$) on synthetic power-law matrices of different exponents of size $100000 \times 100000$. Lower exponents indicate higher skew.} 
    \label{fig:eval-skew-csr-add}
\end{figure}

%
\Cref{fig:skew-csr-add-cpu} plots the speedup \sys{} achieves over \taco{}'s CPU-parallel CSR addition
on skewed power-law matrices, which have heavy imbalance of nonzeros between rows.
\taco{} parallelizes the outer for-loop of the addition, and uses work-stealing~\cite{blumofe1999workstealing}
to dynamically load-balance; with higher skew, \sys{} achieves better load balance by creating equally-sized
work items in the first place.
\Cref{fig:skew-csr-add-gpu} performs the same evaluation on the GPU, where \sys{} achieves significantly
higher speedups over \taco{}, which assigns each row of the addition to a GPU thread.
As the GPU has orders of magnitude more parallelism to saturate than the CPU, \sys{}'s load-balancing
is critical for performance.



Next, we evaluate the impact of recursive partitioning on partition quality.
\Cref{fig:dcsr-lb-heatmap} shows the speedup of using recursive partitioning
over using only hierarchical partitioning on the element-wise multiplication
of two DCSR matrices.
When the outer intersection results in many rows being skipped, the recursive
partitioning approach yields significant speedups by constructing partitions
that better align with the true iteration costs.
On our 8-core CPU, each hierarchical partition contains many (potentially skewed heavy) rows;
when intersection iteration skips these rows, these partitions diverge from the 
true cost computed by our recursive partitioning algorithm and yield higher speedup.
This effect is mitigated on the GPU, as \sys{} computes fine-grained partitions for each thread.
Heavy rows are already split across multiple threads, causing the observed speedup to vary
only with the percent of skipped rows.

\begin{figure}
    \centering
    \begin{subfigure}{0.48\linewidth}
        \centering
        \includegraphics[width=\linewidth, trim=0 0.5em 0 0.5em, clip]{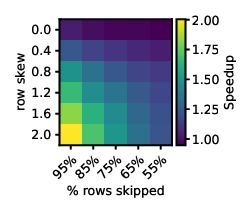}
        \caption{CPU results.}
        \label{fig:dcsr-lb-heatmap-cpu}
    \end{subfigure}
    \hfill
    \begin{subfigure}{0.48\linewidth}
        \centering
        \includegraphics[width=\linewidth, trim=0 0.5em 0 0.5em, clip]{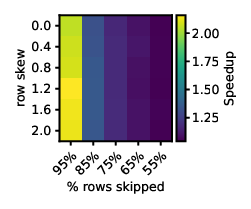}
        \caption{GPU results.}
        \label{fig:dcsr-lb-heatmap-gpu}
    \end{subfigure}
    \vspace{-0.5em}
    \caption{Speed-up of DCSR Hadamard product on synthetic $100000 \times 100000$ matrices from using recursive partitioning (\Cref{alg:recursive-search-alg}) over hierarchical partitioning (\Cref{alg:hierarchical-search-v2}).}
    \label{fig:dcsr-lb-heatmap}
\end{figure}

\subsection{Performance Compared to Handwritten Kernels}
\label{sec:perf-handwritten}

We show that \sys{}-generated code matches the performance of hand-written
kernels by comparing it to hand-optimized parallel kernels in PyTorch Sparse~\cite{pytorch_sparse},
which uses Intel MKL~\cite{mkl} and cuSPARSE~\cite{cusparse} when possible.
%
For operations unsupported by vendor libraries, PyTorch converts to COO and dispatches
to hand-written COO kernels.


%

\begin{figure}
    \centering
    \begin{subfigure}{\linewidth}
        \centering
        \includegraphics[width=0.9\linewidth]{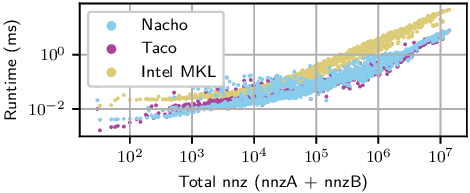}
        \caption{CPU, geo-mean speedup: $1.0\times$ (\taco) and $3.4\times$ (Intel MKL).}
        \label{fig:csr-add-cpu}
    \end{subfigure}
    \vfill
    \begin{subfigure}{\linewidth}
        \centering
         \includegraphics[width=0.9\linewidth]{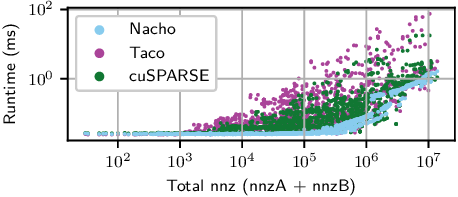}
        \caption{GPU, geo-mean speedup: $2.4\times$ (\taco) and $1.8\times$ (cuSPARSE).}
        \label{fig:csr-add-gpu}
    \end{subfigure}
    \caption{CSR addition on pairs of SuiteSparse matrices.}
    \label{fig:csr-add}
\end{figure}

We compare \sys{} with these vendor libraries for CSR addition in~\Cref{fig:csr-add-cpu,fig:csr-add-gpu}.
%
\sys{} achieves a geomean speed-up of $3.4\times$ over Intel MKL (between $0.23$--$26\times$),
and a geomean speed-up of $1.8\times$ over cuSPARSE (between $0.13$--$54\times$).
\sys{} is faster than Intel MKL on 93\% of matrices and faster than cuSPARSE on 92\% of matrices.



\begin{figure}
    \centering
    \begin{subfigure}[t]{\linewidth}
        \centering
        \includegraphics[width=0.9\linewidth]{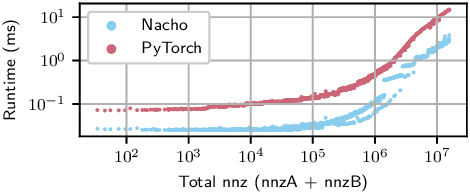}
        \caption{COO addition, geo-mean speedup: $4.1\times$.}
        \label{fig:coo-add-gpu}
    \end{subfigure}\vfill
    \begin{subfigure}[t]{\linewidth}
        \centering
        \includegraphics[width=0.9\linewidth]{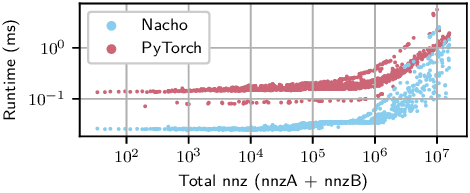}
        \caption{COO Hadamard product, geo-mean speedup: $5.3\times$.}
        \label{fig:coo-mul-gpu}
    \end{subfigure}
    \caption{GPU kernels on pairs of SuiteSparse matrices.}
    \label{fig:eval-coo-kernels}
\end{figure}

The cuSPARSE library does not support COO element-wise operations, so PyTorch's GPU backend
uses custom implementations that do not perform coiteration and are thus
asymptotically inefficient.
Our evaluation in~\Cref{fig:eval-coo-kernels} shows this:  \sys{} achieves geo-mean speedups of $4.1\times$ and $5.3\times$ for 
element-wise addition and multiplication, respectively.
%
%
%



Lastly, we compare against cuSPARSE on sparse matrix-matrix multiplication on SuiteSparse matrices in~\Cref{fig:eval-spgemm} and see a geo-mean slowdown of $1.38\times$.
Based on symbols extracted from profiles, we believe this slowdown is due to
cuSPARSE's use of hash-based accumulators~\cite{nagasaka2017spgemm} as opposed to our
ESC-based reduction~\cite{bell2012esc}.
%
%
We believe these results warrant investigation into parallelized versions of~\citet{zhang2024spworkspaces}
to improve the performance of scatter-based kernels.

\begin{figure}[t]
    \centering
    \includegraphics[width=0.9\linewidth]{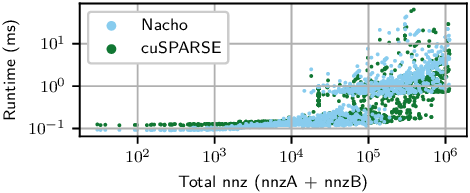}
    \caption{SpGEMM CSR $\times$ CSR, geo-mean speedup: $0.73\times$.}
    \label{fig:eval-spgemm}
\end{figure}




\subsection{Generality Across Data Structures}
\label{sec:data-structure-generality}

To demonstrate the benefit of code specialized to particular data structures, we evaluate operations on sparse data structures that existing systems do not support.
First, we evaluate third-order tensor addition on tensors from the FROSTT~\cite{frosttdataset} dataset 
that fit in GPU memory in~\Cref{tbl:addition}.
\sys{}'s COO already outperforms PyTorch Sparse's COO on both CPU and GPU: PyTorch's CPU implementation is unparallelized, and its GPU implementation is asymptotically inefficient, as it does not perform coiteration.
\sys{}'s parallel CPU kernel even surpasses PyTorch's GPU results.
Supporting CSF yields further gains: \sys{}'s CSF kernel outperforms \sys{}'s COO kernel, demonstrating the concrete benefit of specializing to a more compressed format.
This performance boost is likely due to reduced memory traffic in loading the CSF tensors as opposed to the COO tensors.

\begin{table}[b]
\centering
\small
\caption{Runtimes (ms) of third-order tensor addition on FROSTT~\cite{frosttdataset} tensors. PyTorch (PT) uses COO and has integer overflow errors on nell-1.}
\label{tbl:addition}

\begin{tabular}{lccc|ccc}
\toprule
& \multicolumn{3}{c|}{\textbf{CPU}} & \multicolumn{3}{c}{\textbf{GPU}} \\
\cmidrule(lr){2-4} \cmidrule(lr){5-7}
\textbf{Tensor}
& \textbf{CSF}
& \textbf{COO}
& \textbf{PT}
& \textbf{CSF}
& \textbf{COO}
& \textbf{PT} \\
\midrule
\multicolumn{1}{l|}{darpa}  & \textbf{17.69} & 64.06 & 1013.75 & \textbf{5.75} & 10.32 & 69.34 \\
\multicolumn{1}{l|}{nell-2} & \textbf{47.20} & 176.79 & 2687.37 & \textbf{10.34} & 25.15 & 331.29 \\
\multicolumn{1}{l|}{fb-m}   & \textbf{167.17} & 270.39  & 3674.43 & \textbf{40.83} & 84.02 & OOM \\
\multicolumn{1}{l|}{nell-1} & \textbf{110.69} & 321.34 & Err & \textbf{24.93} & 63.52 & Err \\
\bottomrule
\end{tabular}
\end{table}

Next , we evaluate element-wise addition of a COO matrix with a CSR matrix in~\Cref{fig:coo-csr-add}.
\sys{} outperforms PyTorch, which converts the CSR operand to COO before performing the (non-coiterative)
COO addition.
\sys{}'s speedup comes from asymptotically efficient coiteration-based kernels and
by avoiding the unnecessary format conversion.

\begin{figure}
    \centering
    \begin{subfigure}{\linewidth}
        \centering
        \includegraphics[width=0.9\linewidth]{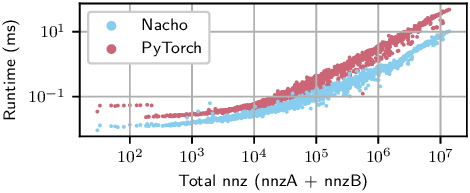}
        \caption{CPU results, geo-mean speedup $2.7\times$.}
        \label{fig:coo-csr-add-cpu}
    \end{subfigure}
    \begin{subfigure}{\linewidth}
        \centering
        \includegraphics[width=0.9\linewidth]{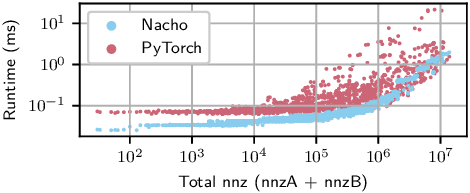}
        \caption{GPU results, geo-mean speedup $2.0\times$.}
        \label{fig:coo-csr-add-gpu}
    \end{subfigure}
    \caption{COO matrix added to CSR matrix on SuiteSparse. 
    }
    \label{fig:coo-csr-add}
\end{figure}

Finally, we evaluate CSR element-wise multiplication in~\Cref{fig:csr-mul}.
This operation is not supported by Intel MKL or cuSPARSE, so PyTorch implements it with
a conversion to COO and an asymptotically inefficient, general-purpose kernel.
\sys{} again achieves significant speedup by avoiding conversions and generating
coiterative code.

\begin{figure}
    \centering
    \begin{subfigure}{\linewidth}
        \centering
        \includegraphics[width=0.9\linewidth]{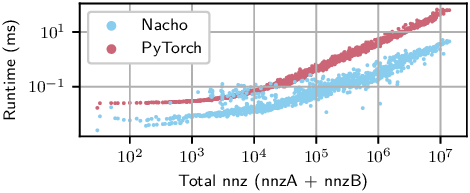}
        \caption{CPU results, geo-mean speedup $6.4\times$.}
        \label{fig:csr-mul-cpu}
    \end{subfigure}
    \begin{subfigure}{\linewidth}
        \centering
        \includegraphics[width=0.9\linewidth]{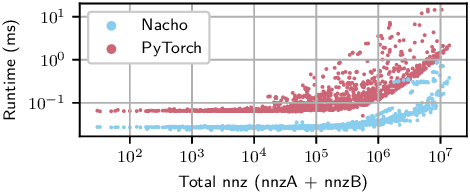}
        \caption{GPU results, geo-mean speedup $3.8\times$.}
        \label{fig:csr-mul-gpu}
    \end{subfigure}
    \caption{CSR Hadamard product on SuiteSparse matrices. 
    }
    \label{fig:csr-mul}
\end{figure}

\subsection{Load Balancing Fused Kernels}
\label{sec:expression-generality}

Because our partitioning algorithm is general across expressions, \sys{}
can extract both the constant-factor and asymptotic~\cite{kjolstad2017taco}
improvements that come from fusing sparse tensor algebra operations into 
a single kernel.
We show that \sys{} can effectively load balance these fused kernels, allowing
for additional speedup over parallelized separate kernels.


\begin{figure}
    \centering
    \begin{subfigure}[t]{\linewidth}
        \centering
        \includegraphics[width=0.9\linewidth]{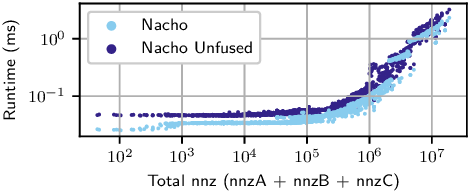}
        \caption{Three-way CSR $\oplus$, geo-mean speedup: $1.4\times$ over unfused.}
        \label{fig:eval-csr-add-fused}
    \end{subfigure}
    \begin{subfigure}[t]{\linewidth}
        \centering
        \includegraphics[width=0.9\linewidth]{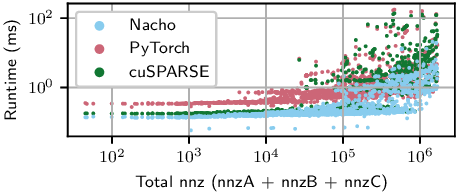}
        \caption{Sampled SpGEMM (SSSMM), geo-mean speedup: $3.1\times$ over
        PyTorch and $1.7\times$ over cuSPARSE with \sys{}'s $\odot$.}
        \label{fig:eval-sssmm}
    \end{subfigure}
    \caption{Fusion benefits across SuiteSparse matrices.}
    \label{fig:eval-fusion}
\end{figure}

\Cref{fig:eval-csr-add-fused} compares the performance of a fused three-way CSR addition
with \sys{} compared against an unfused, but parallelized, \sys{} implementation
(\Cref{fig:csr-add-gpu} demonstrates \sys{} outperforms cuSPARSE as a baseline).
The geo-mean speedup is $1.4\times$ the unfused kernels.
Next,~\Cref{fig:eval-sssmm} evaluates fused Sampled-SpGEMM (SSSMM)~\cite{milakovic2022sssmm} over 
PyTorch, which uses both a cuSPARSE SpGEMM and element-wise multiplication, showing significant speedup.
We additionally compare against an implementation that uses a cuSPARSE SpGEMM followed by a
\sys{}-generated element-wise multiplication, the best implementation of each binary operator.
Despite \sys{}'s SpGEMM being slower than cuSPARSE (\Cref{fig:eval-spgemm}), \sys{}'s fused SSSMM
kernel outperforms the unfused cuSPARSE by $1.7\times$ (geo-mean).

Lastly, we evaluate the third-order inner product~\cite{kjolstad2017taco} on FROSTT~\cite{frosttdataset}, on shifted versions of the same tensor in~\Cref{tbl:inner-product}.
PyTorch executes the element-wise multiplication and reduction as separate kernels, unlike \sys{}.
%
%
%
The relative performance of CSF and COO varies by tensor and backend.
While CSF benefits from hierarchical skipping on some tensors, it incurs higher thread divergence on other tensors.



\begin{table}[h]
\centering
\small
\caption{Runtime (ms) of third-order inner product on FROSTT~\cite{frosttdataset}. PyTorch (PT) uses COO, and has integer overflow errors on nell-1.}
\label{tbl:inner-product}

\setlength{\tabcolsep}{4pt}
\begin{tabular}{lccc|ccc}
\toprule
& \multicolumn{3}{c|}{\textbf{CPU}} & \multicolumn{3}{c}{\textbf{GPU}} \\
\cmidrule(lr){2-4} \cmidrule(lr){5-7}
\textbf{Tensor}
& \textbf{CSF}
& \textbf{COO}
& \textbf{PT}
& \textbf{CSF}
& \textbf{COO}
& \textbf{PT} \\
\midrule
\multicolumn{1}{l|}{darpa}  & \textbf{0.216} & 22.123 & 116.949 & \textbf{0.283} & 0.861 & 5.525 \\
\multicolumn{1}{l|}{nell-2} & \textbf{0.428} & 62.590 & 339.860 & \textbf{0.177} & 2.722 & 15.899 \\
\multicolumn{1}{l|}{fb-m}   & 114.875 & \textbf{101.403} & 681.892 & \textbf{8.996} & 10.749 & 20.411 \\
\multicolumn{1}{l|}{nell-1} & \textbf{21.784} & 117.360 & Err & 50.363 & \textbf{11.537} & Err \\
\bottomrule
\end{tabular}
\end{table}

\section{Related Work}

\textbf{Sparse Tensor Algebra Compilers.}
A foundational theory for sparse tensor algebra compilation was developed in the \taco
compiler~\cite{kjolstad2017taco,kjolstad2020thesis}, with extensions targeting different hardware~\cite{hsu2023sam,yadav2022spdistal,senanayake2020scheduling,hsu2025stardust}
and supporting a wider variety of tensor operations~\cite{henry2021complement, root2024burrito, liu2024spconv, sundram2024recurrences}.
These compilers expose parallelism through \emph{scheduling languages} that describe
execution strategies for nested loops~\cite{yadav2022spdistal, senanayake2020scheduling}.
However, these scheduling languages only support parallel execution of dense loops or on
tensor expressions that only contain a single sparse tensor.
Automatic scheduling frameworks have been developed that automate other aspects of scheduling, 
such as loop ordering and temporary insertion~\cite{ahrens2022autoscheduling, deeds2025galley, yan2026scorch}.
Sparse tensor compilers for specific domains, such as sparse machine learning~\cite{ye2023sparsetir},
support alternative parallelization strategies, but are also limited to tensor expressions that
have a single sparse tensor.
Compiling sparse tensor algebra to PyTorch~\cite{pytorch} has been explored to improve performance
of expressions that can utilize fixed-function units for matrix-multiplication on GPUs~\cite{won2026insum}.
WingSpan~\cite{gushin2026wingspan}, a concurrent work, extends Finch~\cite{ahrens2025finch} with support for parallelization via explicit annotations on both the data structures and the loops over sparse data structures.
Notably, WingSpan provides dependence analysis that enables sparse parallel workspaces to support scattering without intermediate materialization, but their parallelization scheme does not guarantee load balancing work.
Finally, inspector-executor approaches~\cite{parsy, sympiler} leverage compile-time analysis of sparse tensor
structure to generate parallelized or vectorized code.
The Sparse Polyhedral Framework~\cite{zhao2022spf, strout2018spf} supports parallel iterate-locate kernels~\cite{zhao2022spf} but not the general coiteration \sys supports.

\textbf{Merge Path Partitioning.}
A subtle but important difference between partitioning strategies for merge sort~\cite{saher2012mergepath, green2012gpumergepath} and partitioning strategies for sparse tensor algebra 
is that intersections and unions require matching coordinates across operands to be owned by the same partition~\cite{odemuyiwa2023drt, dalton2015mergepath}. 
Other (non-coiterative) strategies for computing intersections and unions may be load balanced in other ways;
we focus on a load-balancing strategy specifically for coiterative merge-based unions and intersections. 
This constraint motivates our algorithm searching \textit{in coordinate space}, as opposed to position space (the space of the iterators into sparse operands).

\textbf{Parallelizing Specific Sparse Kernels.}
Load-balanced parallel algorithms for sparse matrix operations such as sparse matrix-vector multiplication,
sparse matrix addition, and sparse matrix-matrix multiplication were proposed by Dalton et al.~\cite{dalton2015mergepath}.
These algorithms partition the merge path used to compute the union and intersection of 
non-zero sparse tensor coordinates, similar to parallel merging algorithms in the sorting community~\cite{green2012gpumergepath, saher2012mergepath}.
We show a generalization of these algorithms to allow for load-balanced parallelization of 
any sparse tensor algebra expression.
Significant work has gone into optimizing sparse matrix-matrix multiplication~\cite{dalton-spgemm, liu-spgemm},
additionally focusing on parallelization of the construction of the sparse result.

\textbf{Parallel Hardware For Sparsity.}
%
%
Various specialized hardware has been proposed to accelerate sparse tensor algebra~\cite{extensor, outerspace, mapraptor, tensaurus, gamma}.
Programmable dataflow hardware for arbitrary sparse tensor
algebra expressions~\cite{hsu2023sam, koul2025onyx} extract pipeline parallelism from coiteration
and data parallelism from parallel outer loops.
We believe the partitioning strategies presented in \sys{}  could be used by this hardware to
achieve better load-balanced data-parallel execution.
\section{Conclusion}
We present a parallel partitioning algorithm that guarantees load balance for sparse tensor algebra with multiple sparse operands.
Our implementation in \sys specializes this partitioning algorithm to concrete sparse data structures and a loop order to generate efficient partitioning kernels for both CPUs and GPUs.
We match hand-parallelized sparse tensor kernels while maintaining generality across data structures and expressions.
We believe our techniques can be extended to produce partitions for load-balanced execution on supercomputers and custom sparse accelerators, and to generate kernels for other irregular domains such as database queries.

\begin{acks}
We would like to thank (in no particular order): Aart Bik, Ben Driscoll, Chris Gyurgyik, Devanshu Ladsaria, Evan Williams, Gautham Ravipati, James Dong, Katherine Mohr, Michael Pellauer, Olivia Hsu, Rupanshu Soi, Shiv Sundram, and Shoaib Kamil, for their helpful feedback on this draft.
This project was funded by DARPA under the Machine Learning and Optimization-guided Compilers for Heterogeneous Architectures (MOCHA) program.
Alexander was supported by the Qualcomm Innovation Fellowship during this work.
\end{acks}

\bibliographystyle{ACM-Reference-Format}
\bibliography{references}

\end{document}